\newtheorem{theorem}{Theorem}
\newtheorem{definition}{Definition}
\newtheorem{proposition}{Proposition}
\newtheorem{lemma}{Lemma}
\newtheorem{example}{Example}
\newtheorem{remark}{Remark}
\newtheorem{assumption}{Assumption}
\newcommand{\ba}{\begin{array}}
\newcommand{\ea}{\end{array}}
\newcommand{\be}{\begin{equation}}
\newcommand{\ee}{\end{equation}}
\newcommand{\ds}{\displaystyle}
\newcommand{\eps}{\varepsilon}
\newcommand{\mc}{\mathcal}
\newcommand{\ov}{\overline}
\def\1{\boldsymbol{1}}
\def\0{\boldsymbol{0}}
\newcommand{\R}{\mathbb{R}}
\newcommand{\de}{\mathrm{d}}
\newcommand{\tcb}{\textcolor{black}}
\newcommand{\tcm}{\textcolor{black}}
\def\R{\mathbb{R}}
\tikzstyle{v_c}=[circle, draw,inner sep=2pt, minimum width=12pt, color=blue]
\tikzstyle{v_a}=[circle, draw,inner sep=2pt, minimum width=12pt, color=red]
\tikzstyle{edge} = [draw,thick,-,font=\small ]
\tikzstyle{label} = [draw,fill=black,font=\normalsize]
\newenvironment{proofof}[1]{%
	\par\hspace{\parindent}\textit{Proof (of #1):}\ }%
{\hfill$\blacksquare$\par}
\def\BibTeX{{\rm B\kern-.05em{\sc i\kern-.025em b}\kern-.08em
	T\kern-.1667em\lower.7ex\hbox{E}\kern-.125emX}}
\begin{document}
\title{Optimal Control of Behavioral-Feedback SIR Epidemic Model}
\author{Martina Alutto, \IEEEmembership{Member, IEEE}, Leonardo Cianfanelli, \IEEEmembership{Member, IEEE}, Giacomo Como, \IEEEmembership{Member, IEEE}, Fabio Fagnani \IEEEmembership{Member, IEEE}, and Francesca Parise \IEEEmembership{Member, IEEE}
	\thanks{Martina~Alutto is  with the Division of Decision and Control Systems, School of Electrical Engineering and Computer Science, KTH Royal Institute of Technology, Stockholm, Sweden (e-mail: alutto@kth.se). Leonardo Cianfanelli, Giacomo Como and Fabio Fagnani are with the Department of Mathematical Sciences ``G.L.~Lagrange,'' Politecnico di Torino, 10129 Torino, Italy.  E-mail: {\{\!martina.alutto;leonardo.cianfanelli;giacomo.como,fabio.fagnani\!\}@polito.it}. 
		Giacomo Como is also with the Department of Automatic Control, Lund University, 22100 Lund, Sweden. Francesca Parise is with the School of Electrical and Computer Engineering, Cornell University, Ithaca, NY 14850 USA. E-mail: fp264@cornell.edu.
	}
	\thanks{This work was partially supported by the Research Project PRIN 2022 ``Extracting Essential Information and Dynamics from Complex Networks'' (Grant Agreement number 2022MBC2EZ) funded by the Italian Ministry of University and Research.}
	\thanks{This material is based upon work supported by the Air Force Office of Scientific Research under award number FA9550-24-1-0082.}
	\thanks{This work is based in part on material appearing in the first author's PhD dissertation \cite{thesisMartina}. Some of the preliminary results appeared in \cite{CDC2025}.}
}
\maketitle

\begin{abstract}
	We consider a behavioral-feedback SIR epidemic model, in which the infection rate depends in feedback on the fractions of susceptible and infected agents, respectively. The considered model allows one to account for endogenous adaptation mechanisms of the agents in response to the epidemics, such as voluntary social distancing, or the adoption of face masks. For this model, we formulate an optimal control problem for a social planner that has the ability to reduce the infection rate to keep the infection curve below a certain threshold within an infinite time horizon, while minimizing the intervention cost. Based on the dynamic properties of the model, we prove that, under quite general conditions on the infection rate, the \emph{filling the box} strategy is the optimal control. This strategy consists in letting the epidemics spread without intervention until the threshold is reached, then applying the minimum control that leaves the fraction of infected individuals constantly at the threshold until the reproduction number becomes less than one and the infection naturally fades out. Our result generalizes one available in the literature for the equivalent problem formulated for the classical SIR model, which can be recovered as a special case of our model when the infection rate is constant. Our contribution enhances the understanding of epidemic management with adaptive human behavior, offering insights for robust containment strategies.
\end{abstract}

\begin{IEEEkeywords}
	Epidemic models, Susceptible-Infected-Recovered model, Optimal control problem.
\end{IEEEkeywords}

\section{Introduction}\label{sec:introduction}
The spread of infectious diseases has always been a serious threat to public health. To reduce their impact, it is essential to design effective containment strategies. However, this task is challenging because it requires taking into account several factors, such as the capacity of the healthcare system and the social and economic consequences of interventions. When the resources are limited, it becomes especially important to identify policies that use them in the most effective way. 

In recent years, mathematical models have become important tools for predicting how outbreaks evolve and for assessing the effectiveness of different control strategies. Deterministic compartmental models play a central role in capturing the essential features of disease transmission and supporting the formulation of public health strategies. Among these, the Susceptible-Infected-Recovered (SIR) model is one of the most extensively studied because of its simplicity and generality \cite{Kermack.McKendrick:1927, Hethcote2000TheMO, MKendrickApplicationsOM, Diekmann2000MathematicalEO}. It divides the population into three compartments (susceptible, infected, and recovered) and describes their evolution in time through a system of nonlinear differential equations, assuming that the population is large, homogeneous, and well mixed.

However, a key limitation of the classical SIR model is its assumption of a constant infection rate, disregarding behavioral adaptations. Historical and recent evidence shows that agents change their behavior in response to perceived infection risk, affecting contact patterns and, consequently, the disease transmission. For example, during the 1918 influenza pandemic people avoided crowded areas \cite{crosby2003america}, during the SARS outbreak, widespread use of face masks and reduced travel were observed \cite{lau2005sars}, and during the recent COVID-19 pandemic behavioral responses were further reinforced by government measures such as social distancing and lockdowns \cite{hsiang2020effect}. 

To incorporate the human behavior into epidemic models, two main approaches have emerged. The first one consists in adding to the system new dynamic variables, often based on \emph{evolutionary game theory} \cite{sandholm2010population}, to model how the agents weigh the costs and benefits of actions like vaccination or isolation based on their risk perception and on social influence. The infection rate is then assumed to depend on such variables, which in turn depend on the epidemic state \cite{satapathi2022epidemic, ye2021game, frieswijk2022meanfield, paarporn2023sis, amaral2021epidemiological, kabir2020evolutionary, martins2023epidemic, certorio2022epidemic}. The drawback of this approach is that the number of variables of the system increases, making optimal control problems often analytically intractable.

The second main modeling approach incorporates the human behavior through static feedback mechanisms. In these \emph{behavioral-feedback SIR epidemic models}, the infection rate $\beta$ is a function of the current epidemic state rather than dependent on additional dynamic variables or constant as in the classical SIR model ~\cite{funk2010modelling, verelst2016review}. A seminal contribution for this class of models is \cite{CAPASSO197843}, which introduced a SIR model where the infection rate $\beta$ decreases with the fraction of infected agents $y$, producing a unimodal infection curve. Later studies confirmed that such feedback can flatten the curve and preserve key stability properties \cite{Baker2020, franco2020feedback, korobeinikov2006lyapunov}, and that the unimodality of the infection curve is preserved even if the the infection rate depends on both the fraction of susceptible agents $x$ and infected agents $y$ \cite{Alutto2021OnSE}. Recent work further explored these effects, deriving bounds on epidemic overshoot and identifying invariants of motion \cite{nguyen2024upper}, or studying complex nonlinear feedback with saturation effects \cite{srivastava2024nonlinear}. 
Other behavioral-feedback SIR models with non-linear infection rate $\beta(x,y)$ are considered in \cite{feng2000endemic, gao2024final, liu1987dynamical}. Time-dependent infection rates have also been introduced to reflect seasonal or delayed responses \cite{boatto2018sir}.

In this paper, we address an optimal control problem for a class of behavioral-feedback SIR models where the infection rate depends on both the fraction of susceptible agents $x$ and infected agents $y$.
Optimal control of the classical SIR epidemic model has been extensively studied in recent years \cite{zino2021analysis, yi2022edge,morton1974optimal,behncke2000optimal,parino2024optimal}, particularly for calibrating lockdowns to achieve epidemic control without excessively harming the economy. For example, \cite{kruse2020optimal} proposes an optimal control problem for the SIR model over a finite time horizon, where the epidemic cost is linear in the fraction of infected agents, and the economic cost is linear in the level of intervention, proving that the optimal control is bang-bang with at most two switches.
In \cite{cianfanelli2021lockdown}, the authors study an optimal control problem where the economic cost depends on both the severity and duration of the lockdown, and the epidemic cost is quadratic in the fraction of infected agents to account for hospital congestion. They prove that, for an infinite time horizon, a strategy that stabilizes the fraction of infected agents outperforms one that stabilizes the reproduction number. Similarly, in \cite{acemoglu2021optimal}, the authors solve an optimal control problem numerically for network SIR dynamics, showing that optimal policies differentially targeting risk or age groups significantly outperform uniform policies. 
In \cite{Miclo.ea:2022} the authors provide the solution of an optimal control problem of the SIR model with infinite time horizon, where an external controller aims to minimize the cost of the intervention while ensuring that the fraction of infected agents remains below a specified threshold at every time. 
This threshold represents the healthcare system’s capacity, referred to as the ICU capacity constraint, ensuring comprehensive care for infected patients. Such constraints reflect scenarios like those observed during COVID-19, where the number of critically ill patients risked exceeding available resources, potentially leading to adverse outcomes. Respecting the ICU constraint guarantees that all patients receive necessary medical care.
The optimal strategy consists in setting the minimum lockdown level needed to satisfy the constraint, known as the \emph{filling the box strategy}. In the first stages, the disease spreads uncontrolled, followed by an abrupt lockdown when the infection curves meets the threshold, and then a gradual reopening until a time after which the spread is no longer regulated {and the infection naturally fades out}. The economic rationale behind this optimal policy is to avoid shutting down society while ICU resources remain unused. Therefore, whenever the natural spread does not threaten the ICU constraint, it should proceed without intervention. This policy is proven to be optimal also in \cite{Acemoglu2021OptimalAT}, where the authors propose an optimal control problem for epidemic mitigation that incorporates both molecular and serology testing.

Our main contribution is to prove that, under some mild assumptions on the regularity and monotonicity of feedback mechanism that regulates the infection rate, the filling-the-box strategy is the optimal control for a general class of behavioral-feedback SIR models. This result extends the findings of \cite{Miclo.ea:2022}, which apply to the standard SIR model.
The proof relies on the properties of the behavioral-feedback SIR model, in particular the unimodality of the infection curve, which is guaranteed under very general assumptions on the feedback mechanism. {Compared to \cite{Miclo.ea:2022}, the main technical challenge of our analysis is that the class of behavioral feedback SIR models considered in this paper does not admit in general an invariant of motion, which was exploited in \cite{Miclo.ea:2022} to prove the optimality of the filling-the-box strategy.}

This work extends our preliminary results in \cite{CDC2025}. Therein, we i) derived a non-trivial invariant of motion, ii) characterized the infection peak and iii) computed the cost of filling-the-box strategy for a behavioral-feedback SIR model with a specific feedback mechanism. This work extends such results by considering a general class of feedback mechanisms, rather than focusing on a specific functional form and by proving optimality of the filling-the-box strategy (without the need of an invariant of motion for the analysis). 
Moreover, we provide a numerical counterexample showing that, when the monotonicity assumptions on the infection rate are relaxed, the filling-the-box strategy is no longer optimal, thus proving that it is not possible to further extend the optimality result.

The rest of the paper is organized as follows. In Section~\ref{sec:behavioral-epidemic-model} we introduce the controlled behavioral-feedback SIR epidemic model, formulate the optimal control problem we aim to study, and present our main result. In Section~\ref{sec:uncontrolled-model}, we analyze the uncontrolled dynamics, as a first instrumental step for the proof of our main result. The proof itself is then provided in Section~\ref{sec:proof-optimality}. In Section~\ref{sec:counterexample}, we discuss a counterexample, while in Section~\ref{sec:conclusion} we summarize our contribution and discuss future research lines.

\subsection{Notation}
We briefly gather here some notational conventions adopted throughout the paper. We denote by $\R$ and $\R_{+}$ the sets of real and nonnegative real numbers, respectively. We will often consider differentiable functions $f(x,y)$ and use the notation $\dot f=f_x\dot x+f_y \dot y$ to denote the time derivative of the composite function $f(x(t), y(t))$. 
Throughout this paper, given an interval $\mc I\subseteq\R$, we say that $u:\mc I\to\R^n$ is a piecewise-continuous function if there exists a subset $\mc W\subseteq\mc I$ that does not contain accumulation points and is such that:
\begin{itemize}
	\item $u$ is continuous on $\mc I\setminus\mc W$;
	\item in every $t_0$ in $\mc W$, $u$ has a jump discontinuity, i.e., the left and right limits 
	$$u(t_0^-)=\lim_{t \to t_0^-}u(t)\,,\qquad u(t_0^+)=\lim_{t \to t_0^+}u(t)\,,$$
	exist and are finite. 
\end{itemize}
A function $u:\mc I\to\R^n$ is right-continuous if it is piecewise-continuous and $u(t_0)=u(t_0^+)$ for every discontinuity point $t_0$. 
Furthermore, we say that a continuous function $u:\mc I \to\R$ is piecewise-$\mc C^1$ if, except for a subset $\mc W$ as above, it is differentiable and $u'$ is piecewise-continuous.
%

\section{Controlled Behavioral-Feedback SIR epidemic Model}\label{sec:behavioral-epidemic-model} 
In this section, we introduce the epidemic model studied in this paper and provide our main result. 
\subsection{Model Definition}
We build our analysis on an epidemic model that describes the spread of a disease within a homogeneous population of agents, incorporating an endogenous feedback mechanism and an exogenous control signal.
Before introducing the model, we define the simplex $\Delta$ and the space $\mc S$ by
$$\Delta=\left\{(x,y,z)\in\R_+^3:\,x+y+z= 1\right\}\,,$$
and
$$\mc S = \{(x,y) \in \R_+^2:\, x+y \le 1\}\,,$$
respectively. 
The \emph{controlled behavioral-feedback SIR epidemic model (CBF-SIR)}  is the system of ODEs
\be\label{control-system-full}\begin{cases}
\dot x=-(1-u)\beta(x,y)xy,\\
\dot y=(1-u)\beta(x,y)xy-\gamma y,\\
\dot z=\gamma y\,,
\end{cases}\ee
where 
\begin{itemize}
\item $x= x(t)$, $y=y(t)$ and $z=z(t)$ indicate the fraction of susceptible, infected and recovered agents, respectively, for every time $t \ge 0$;
\item $\beta : \mc S \to(0,+\infty)$ is the state-dependent infection rate, which captures the endogenous behavioral response to the epidemics, assumed to be a $\mc C^1$ function of $(x,y)$;
\item $u:\mathbb{R}_+ \to [0,1]$ is a right-continuous function of time that models a control signal available to an external controller aiming to limit the disease spread by restricting social interactions. The set of available control signals is
$$
\mc U = \{u: [0,+\infty) \to [0,1] \ \text{right-continuous}\};
$$
\item $\gamma>0$ is the recovery rate from the disease.
\end{itemize}
The following result establishes the well-posedness of the CBF-SIR model \eqref{control-system-full} in the simplex $\Delta$.
\begin{proposition}\label{prop:general} 
Given a state-dependent infection rate $\beta$ of class $\mc C^1$, a control signal $u$ in $\mc U$ with jumps in $t_1<t_2<\cdots$, and an initial state $(x_0, y_0,z_0)$ in $\Delta$, there exists a unique piecewise-$\mc C^1$ function $(x,y,z):[0, +\infty)\to \Delta$ 
that solves the ODE \eqref{control-system-full} outside the time instants $t_1<t_2<\cdots$, and is such that
\be\label{initial-state}x(0)=x_0\,,\qquad y(0)=y_0\,,\qquad z(0) =z_0\,.\ee 
Moreover:
\begin{enumerate}
	\item[(i)] $t \to x(t)$ is non-increasing and is strictly decreasing if and only if $x_0>0$ and $y_0>0$;  
	\item[(ii)] $y(t)>0$ for every $t \ge 0$ if and only if $y_0>0$;
	\item[(iii)] the set of equilibrium points in $\Delta$ is 
	$$\Delta_e = \{(x_e, 0, 1-x_e): x_e \in [0,1]\};$$
	\item[(iv)] $(x(t), y(t), z(t))$ converges, for $t\to +\infty$, to an equilibrium point in $\Delta_e$.
\end{enumerate}
\end{proposition}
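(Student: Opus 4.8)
The statement packages several standard facts about the SIR flow, and I would derive all of them from two explicit representations of $x(t)$ and $y(t)$. First I would extend $\beta$ to a locally Lipschitz function on an open neighborhood of $\mc S$ in $\R^2$, so that for every fixed value of $u$ the right-hand side of~\eqref{control-system-full} is a vector field that is continuous in $t$ and locally Lipschitz in $(x,y,z)$. Since the jumps of $u$ do not accumulate, $[0,+\infty)$ is partitioned by $0=t_0<t_1<t_2<\cdots$ into intervals $[t_k,t_{k+1})$ on each of which $u$ is continuous, so the Picard--Lindel\"{o}f theorem yields a unique $\mc C^1$ local solution there; I would then concatenate these pieces, using that $(x,y,z)$, unlike $u$, does not jump, and taking the endpoint value of one piece as the initial condition of the next. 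The result is a function on $[0,+\infty)$ that is continuous, of class $\mc C^1$ off $\{t_1,t_2,\dots\}$, with finite one-sided derivatives at each $t_k$, i.e.\ piecewise-$\mc C^1$; uniqueness is inherited from uniqueness on each piece.

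Next I would prove invariance of $\Delta$, which simultaneously supplies the a priori bound needed for global existence. Summing the three equations gives $\frac{\de}{\de t}(x+y+z)=0$, hence $x+y+z\equiv x_0+y_0+z_0=1$. Treating $s\mapsto u(s)$ and $s\mapsto\beta(x(s),y(s))$ as known bounded functions along a solution, the first two equations are linear in $x$ and in $y$ respectively, so
\[
x(t)=x_0\exp\!\left(-\int_0^t(1-u(s))\,\beta(x(s),y(s))\,y(s)\,\de s\right),
\]
\[
y(t)=y_0\exp\!\left(\int_0^t\big[(1-u(s))\,\beta(x(s),y(s))\,x(s)-\gamma\big]\,\de s\right),
\]
and $z(t)=z_0+\gamma\int_0^t y(s)\,\de s$. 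Since the integrands are bounded on the compact set $\Delta$, these formulas give $x(t)\ge0$, $y(t)\ge0$ (indeed with the same sign as $y_0$) and $z(t)\ge z_0\ge0$, so together with $x+y+z=1$ the solution remains in the compact set $\Delta$; by the standard continuation theorem it therefore extends to all of $[0,+\infty)$.

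Items (i)--(iii) then follow quickly. In the exponent of the $x$-representation the integrand $(1-u)\beta y$ is nonnegative, so $t\mapsto x(t)$ is non-increasing; if $x_0=0$ then $x\equiv0$, and if $y_0=0$ then $y\equiv0$ and hence $\dot x\equiv0$, so in either case $x$ is constant, whereas if $x_0>0$ and $y_0>0$ the two representations give $x(t)>0$ and $y(t)>0$ for all $t\ge0$, which yields~(i). The $y$-representation shows that $y(t)$ keeps the sign of $y_0$ for all $t$, which is~(ii). For~(iii), at any equilibrium $\dot z=\gamma y=0$ forces $y=0$, and then $\dot x=\dot y=0$ automatically; conversely each point $(x_e,0,1-x_e)$ is fixed by~\eqref{control-system-full} for every $u$, so the equilibrium set is exactly $\Delta_e$.

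Finally, for~(iv): $x$ is non-increasing and bounded below by $0$, hence $x(t)\to x_\infty\ge0$; $z$ is non-decreasing and bounded above by $1$, hence $z(t)\to z_\infty\le1$; consequently $y(t)=1-x(t)-z(t)\to y_\infty:=1-x_\infty-z_\infty\ge0$. Letting $t\to\infty$ in $z(t)=z_0+\gamma\int_0^t y(s)\,\de s$ gives $\gamma\int_0^\infty y(s)\,\de s=z_\infty-z_0<\infty$, and since $y\ge0$ has a limit this forces $y_\infty=0$; hence $(x(t),y(t),z(t))\to(x_\infty,0,1-x_\infty)\in\Delta_e$. The one genuinely delicate point is the well-posedness step: reconciling a right-continuous control with possibly infinitely many jumps with the fact that $\beta$ is a priori defined only on $\mc S$ is exactly what forces the bootstrap above, in which the integral representations confine the solution to $\Delta$ before the continuation theorem is invoked; the rest is a short consequence of those two representations and of monotone convergence.
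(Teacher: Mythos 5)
Your proposal is correct and follows essentially the same route as the paper's proof: Picard--Lindel\"of on the intervals between the jumps of $u$, concatenation of the pieces, the exponential/integral representations of $x$, $y$, $z$ to obtain invariance of $\Delta$ (hence global existence) and items (i)--(iii), and convergence via the finiteness of $\gamma\int_0^{\infty}y(s)\,\de s$. The only cosmetic difference is in item (iv), where you exploit monotonicity of $z$ rather than the paper's monotonicity of $x+y$ and argument by contradiction; the underlying mechanism is identical.
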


\begin{proof}
See Appendix.
\end{proof}
From now on, we will refer to the vector functions constructed in the previous proposition as \emph{solutions} of system \eqref{control-system-full}.

Observe that the map $s : \mc S \to \Delta$ defined by $$s(x,y) = (x,y,1-x-y)\,,$$ is a bijection from $\mc S$ to $\Delta$, hence for every time $t$ we can denote the state by the pair $(x(t),y(t))$ in $\mc S$ and omit $z(t)$, under the implicit assumption that $z(t)=1-x(t)-y(t)$. In light of this observation, from now on, we can limit our analysis to the first two equations in 
\eqref{control-system-full}. 
After defining the \emph{reproduction number}
\be\label{eq:reproduction}R(x,y)=\frac1\gamma\beta(x,y)x\,,\ee
the two equations can be written as
\be\label{control-system}\begin{cases}
\dot x=-(1-u)\gamma R(x,y)y\,,\\
\dot y=\gamma((1-u)R(x,y)-1)y\,.\\
\end{cases}\ee
From now on we will work with the system of ODEs  \eqref{control-system} and refer to it as to the CBF-SIR epidemic model.



To prove both the unimodality of the infection curve and to further show the optimality of the proposed control strategy, we rely on the following standing assumption on the feedback mechanism regulating the infection rate.
\begin{assumption}\label{ass:ass1}
	The state-dependent infection rate $\beta$ satisfies
	\begin{align}
		\label{condition2} 
		x\beta_x(x,y)+&\beta(x,y)>0\,,\\[4pt]
		\label{condition3} 
		\beta_y(x&,y) \leq 0\,,
	\end{align}
	for every $(x,y)$ in $\mc S$. 
\end{assumption}
\begin{remark}\label{remark:reproduction}
	Note that \eqref{condition2}-\eqref{condition3} are equivalent to assuming that the reproduction number $R(x,y)$ is increasing in $x$ and non-increasing in $y$ for every $(x,y)$ in $\mc S$. 
\end{remark}

Equation. \eqref{condition2} captures how the infection rate varies with the fraction of susceptible agents. This condition is satisfied for example when $\beta_x(x,y)$ is positive (e.g., when a higher fraction of susceptible leads to a lower perceived risk, encouraging social interactions and riskier behaviors that facilitate the transmission of the disease). Equation \eqref{condition2} can also be satisfied when $\beta_x(x,y)$ is negative but not too large in magnitude. A negative value of $\beta_x(x,y)$ can be interpreted as the effect of pandemic fatigue, where agents gradually relax self-protective measures as the epidemic progresses. Indeed, by Proposition~\ref{prop:general}(i), the fraction of \tcb{susceptible agents} $x$ decreases over time, so this condition reflects an increase in the infection rate as \tcb{time progresses}.
Equation \eqref{condition3} captures the tendency of the agents to reduce their interaction level in response to the number of active cases. We remark that \eqref{condition2} generalizes the assumption made in \cite{Alutto2021OnSE}, while \eqref{condition3} is a standard assumption in the literature of feedback SIR epidemic models \cite{Baker2020, franco2020feedback}. 

\begin{example}\label{ex2} 
	Let $a\ge0$ and $b: \mathbb{R} \to (0,+\infty)$ be an arbitrary non-decreasing $\mc C^1$ function.
	The two families of infection rates
	\be \label{eq:example}\beta(x,y)=\frac{b(x)}{1+ a y }\,, \ee
	\be\label{eq:example2} \beta(x,y) = b(x) (1-ay)\,,\ee
	satisfy \eqref{condition2}-\eqref{condition3}.
	The CBF-SIR epidemic model with infection rate \eqref{eq:example} has been studied in \cite{CDC2025}. Note that also the classical SIR model may be recovered as a special case from \eqref{eq:example}-\eqref{eq:example2} by letting $a=0$ and $b$ being a positive constant.
\end{example}

	
	\subsection{Problem Statement}
	We consider an external controller that designs a control signal \tcb{$u$} in $\mc U$ to reduce the infection rate. The goal of the controller is to keep the fraction of infected agents below a threshold $\bar y$ in $(0,1]$, representing for instance the healthcare capacity.
	\begin{definition}
		Given $\bar y$ in $(0,1]$ and $(x_0, y_0)$ in $\mc S$ such that $y_0 \le \bar y$, we say that a control signal $u$ in $\mc U$ is \emph{feasible} if the solution of the CBF-SIR epidemic model \eqref{control-system} with initial condition $(x_0, y_0)$ satisfies $y(t) \le \bar y$ for every $t \ge 0$. The set of all feasible control signals for a given threshold $\bar y$ and initial condition $(x_0,y_0)$ is denoted $\mc U^{\bar y}_{x_0,y_0}$.
	\end{definition}
	
	Note that $\mc U^{\bar y}_{x_0,y_0}$ is non-empty since it contains the trivial control signal $u \equiv 1$.
	
	The implementation of the control has an associated economic cost that the controller aims to minimize. The cost is modeled by the functional
	$J : \mc U^{\bar y}_{x_0,y_0} \to [0,+\infty]$ defined by
	$$J(u) = \int_{0}^{+\infty} u(t) \, dt\,.$$
	We then consider the optimal control problem 
	\be \begin{aligned}
		\label{control-problem} 
		\mspace{-15mu}V^*(x_0,y_0)\mspace{15mu}=\mspace{15mu} & \min_{u \in \mc U_{x_0,y_0}^{\bar{y}}} J(u).
	\end{aligned}
	\ee
	Note that the solution of \eqref{control-problem} is trivial if $y_0=0$, since Proposition~\ref{prop:general}(ii) guarantees that in this case $y(t) = 0$ for every $t \ge 0$ under every control signal in $\mc U$, including the zero cost control $u \equiv 0$. For this reason, we will work under the assumption that $y_0>0$. In the next section we establish our main result, which provides a solution to \eqref{control-problem}.

	\subsection{Main result}
	We now formulate our main result, whose proof is deferred to Section~\ref{sec:proof-optimality}. To this end, let \tcm{$\mc S_+ = \{(x,y) \in \mc S: x>0, y>0\}$} and
	\be\label{rho-def}\rho(x,y)=\frac{R(x,y)-1}{R(x,y)}=1-\frac{\gamma}{x \beta(x,y)} \,,\ee
	for every $(x,y)$ in $\mc S_+$.
	\begin{theorem}\label{theo:theo-control} 
		Let $\beta$ be of class $\mc C^2$ and let Assumption~\ref{ass:ass1} hold true.
		Consider a threshold $\bar y$ in $(0,1]$ and an initial condition $(x_0,y_0)$ in $\mc S$ with $0< y_0 \le \bar y$.
		Then, the optimal control problem \eqref{control-problem} admits a solution $u^*$ in $\mc U_{x_0,y_0}^{\bar{y}}$ that has the following feedback representation. Indicating with $(x^*(t), y^*(t))$ the solution of \eqref{control-system} with initial condition \eqref{initial-state} under the control signal $u^*(t)$, we have that 
		\be\label{feedback}u^*(t) = \mu(x^*(t), y^*(t)),\quad \forall t\geq 0\,,\ee 
		where 
		\be\label{eq:mu}
		\mu(x,y) = \begin{cases}
			0 & \text{ if } y < \bar{y} \\
			\ds\left[\rho(x,\bar y)\right]_+ & \text{ if } y = \bar{y}\,.
		\end{cases} \ee
		Moreover, $u^*$ is unique  in $\mc U_{x_0,y_0}^{\bar{y}}$.
	\end{theorem}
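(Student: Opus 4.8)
The plan is to establish optimality of the filling-the-box control $u^*$ in three stages: first characterize the structure of $u^*$ and show it is feasible, then prove a lower bound on the cost of \emph{any} feasible control, and finally verify that $u^*$ attains this bound, with uniqueness following from the strictness of the bound off the prescribed trajectory.

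\emph{Step 1: Construction and feasibility of the candidate.} Using Proposition~\ref{prop:general} applied to the uncontrolled dynamics ($u\equiv 0$), run the system from $(x_0,y_0)$. By unimodality of the infection curve (which, as the excerpt states, holds under Assumption~\ref{ass:ass1} and is presumably proved in Section~\ref{sec:uncontrolled-model}), either $y(t)$ never reaches $\bar y$ — in which case $u^*\equiv 0$ is trivially optimal with $V^*=0$ — or there is a first time $\tau_1$ with $y(\tau_1)=\bar y$. For $t\ge\tau_1$, impose $y(t)\equiv\bar y$; substituting $\dot y=0$ into the second equation of \eqref{control-system} forces $(1-u)R(x,\bar y)=1$, i.e. $u=\rho(x,\bar y)$ as long as this is nonnegative, which matches \eqref{eq:mu}. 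Along this arc, $\dot x=-\gamma\bar y<0$, so $x$ decreases linearly; since $R(x,\bar y)$ is increasing in $x$ by Remark~\ref{remark:reproduction}, $\rho(x,\bar y)$ decreases, and at the time $\tau_2$ when $R(x(\tau_2),\bar y)=1$ we have $u^*=0$; for $t>\tau_2$, $R<1$ along the free trajectory (again using monotonicity plus the fact that $x$ keeps decreasing), so $y$ decreases to $0$ and the constraint stays satisfied with zero control. This shows $u^*\in\mc U^{\bar y}_{x_0,y_0}$ and that its cost is $J(u^*)=\int_{\tau_1}^{\tau_2}\rho(x^*(t),\bar y)\,dt$, a finite quantity expressible as a curve integral in $x$.

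\emph{Step 2: Lower bound for arbitrary feasible $u$.} This is the heart of the argument and the place where the absence of an invariant of motion (flagged in the introduction) makes things harder than in \cite{Miclo.ea:2022}. The idea is to derive, along any feasible trajectory, a differential inequality that can be integrated. Consider the quantity $w(t)=x(t)$ (or a suitable monotone function of $x$) and compare the control effort needed to ``steer $x$ down by a given amount'' under the constraint $y\le\bar y$. Concretely: from the two equations, $\dot x = -(1-u)\gamma R(x,y)y$ and $\dot y + \gamma y = (1-u)\gamma R(x,y)y = -\dot x$, so $-\dot x = \dot y + \gamma y$, giving the exact relation $x(t) = x_0 - y(t) + y_0 - \gamma\int_0^t y(s)\,ds$, valid for \emph{every} control. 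Hence the terminal susceptible value $x_\infty=\lim x(t)$ satisfies $x_0+y_0-x_\infty = \gamma\int_0^\infty y(s)\,ds$ (using $y(\infty)=0$). Feasibility caps $y(s)\le\bar y$; but more importantly one wants to lower-bound $J(u)=\int u\,dt$. For this, write $u = 1 - \frac{-\dot x}{\gamma R(x,y)y} = 1 + \frac{\dot x}{\gamma R(x,y) y}$ on the set where $y>0$, and integrate: the region where $u$ must be strictly positive is exactly where the unconstrained flow would violate $y\le\bar y$. The clean way is to compare, as functions of the decreasing variable $x$, the infected level $y(x)$ under an arbitrary feasible control versus under $u^*$: one shows via Gronwall/comparison on $\frac{dy}{dx}$ (obtained by dividing the two ODEs, $\frac{dy}{dx} = \frac{(1-u)R(x,y)-1}{(1-u)R(x,y)} = 1 - \frac{1}{(1-u)R(x,y)}$, which is \emph{increasing in $u$}) that any control cheaper than $u^*$ on some $x$-interval forces $y$ to exceed $\bar y$ later, contradicting feasibility. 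Integrating the pointwise inequality $u(t)\ge$ (comparison quantity) over the time the trajectory spends with $y$ near $\bar y$ yields $J(u)\ge J(u^*)$. Care is needed to handle the parametrization change $t\leftrightarrow x$ (legitimate since $x$ is strictly decreasing whenever $y>0$, by Proposition~\ref{prop:general}(i)), the possibility that a feasible control lets $y$ dip below $\bar y$ and return, and the behavior as $t\to\infty$.

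\emph{Step 3: Equality case and uniqueness.} Equality in the comparison estimate of Step~2 forces $\frac{dy}{dx}$ to match that of the filling-the-box trajectory $x$-a.e., which pins down $u(t)=\mu(x(t),y(t))$ wherever $y>0$; combined with right-continuity of controls in $\mc U$ and $y(t)>0$ for all $t$ (Proposition~\ref{prop:general}(ii), since $y_0>0$), this forces $u=u^*$ identically. I expect the main obstacle to be making the comparison argument in Step~2 fully rigorous without an invariant: one must show that ``spending less control early'' is never compensated by the nonlinear, state-dependent $R(x,y)$ later — this is where monotonicity of $R$ in both arguments (Remark~\ref{remark:reproduction}) and the $\mc C^2$ regularity are essential, and where the unimodality result from Section~\ref{sec:uncontrolled-model} must be invoked to control the free-flow portions before $\tau_1$ and after $\tau_2$.
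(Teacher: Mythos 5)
Your Step 1 matches the paper's Proposition~\ref{lemma:Ju*}: the candidate control, its feasibility, and its cost are obtained exactly as you describe. The gap is in Step 2, which you correctly identify as the heart of the matter but do not actually close. The comparison on $\frac{dy}{dx}$ (note the sign slip: $\frac{dy}{dx}=-1+\frac{1}{(1-u)R}$, since $\dot x<0$) at best orders \emph{orbits} in the phase plane; it does not bound the cost $J(u)=\int_0^{+\infty}u(t)\,dt$, which is an integral in \emph{time}, and the map between position along the orbit and elapsed time is itself control-dependent. Moreover, the claimed pointwise inequality does not exist: a feasible control may exceed $u^*$ where $u^*=0$ (e.g.\ before the threshold is reached) and then be smaller than the filling-the-box level on the threshold arc, so no parametrization makes $u\ge u^*$ pointwise, and ``any control cheaper than $u^*$ on some $x$-interval forces $y$ to exceed $\bar y$ later'' is false as stated. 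What is needed is a global accounting device, and this is exactly where the paper does the real work that your sketch omits.

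The paper's resolution is to build a constant of motion $h(x,y)$ of the \emph{uncontrolled} flow on $\mc D_{\bar y}^+$ (the abscissa at which the free orbit through $(x,y)$ hits the line $y=\bar y$), prove it is $\mc C^2$ and satisfies $h_xR=h_y(R-1)$ together with the key bound $0\le h_y\le 1/\rho(h(x,y),\bar y)$ --- whose upper half uses $\dot h_y=-\gamma yR^{-1}h_yR_y\ge0$, i.e.\ precisely condition \eqref{condition3} --- and then to define the candidate value function $V(x,y)=\frac1{\gamma\bar y}\int_{\bar x}^{h(x,y)}\rho(s,\bar y)\,\de s$ on $\mc D_{\bar y}^+$, extended by $0$ on $\mc D_{\bar y}^-$. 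These bounds give $0\le V_y\le1/(\gamma\bar y)$ and $V_x=\rho(h,\bar y)V_y$, whence the dissipation inequality $\dot V(x(t),y(t))\ge-u(t)$ along \emph{any} feasible trajectory (after first truncating $u$ to zero once the trajectory leaves $\mc D_{\bar y}^+$, which can only decrease the cost), with equality along the candidate trajectory; integrating yields $J(u)\ge V(x_0,y_0)=J(u^*)$. Since $V$ fails to be $\mc C^1$ across $\Gamma^-(\bar x,\bar y)$, one must also verify absolute continuity of $t\mapsto V(x(t),y(t))$ before integrating, and the uniqueness argument exploits the strictness of the intermediate bound $\dot V\ge-\frac{y}{\bar y}u$ when $y<\bar y$. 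None of this machinery --- nor a workable substitute --- appears in your proposal, so Step 2 (and hence Step 3, which relies on its equality case) is a genuine gap rather than a routine verification.
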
\smallskip
	\begin{proof}
		See Section~\ref{sec:proof-optimality}. 
	\end{proof}\medskip
	
	\begin{figure}
		\centering
		\includegraphics[scale=0.7]{./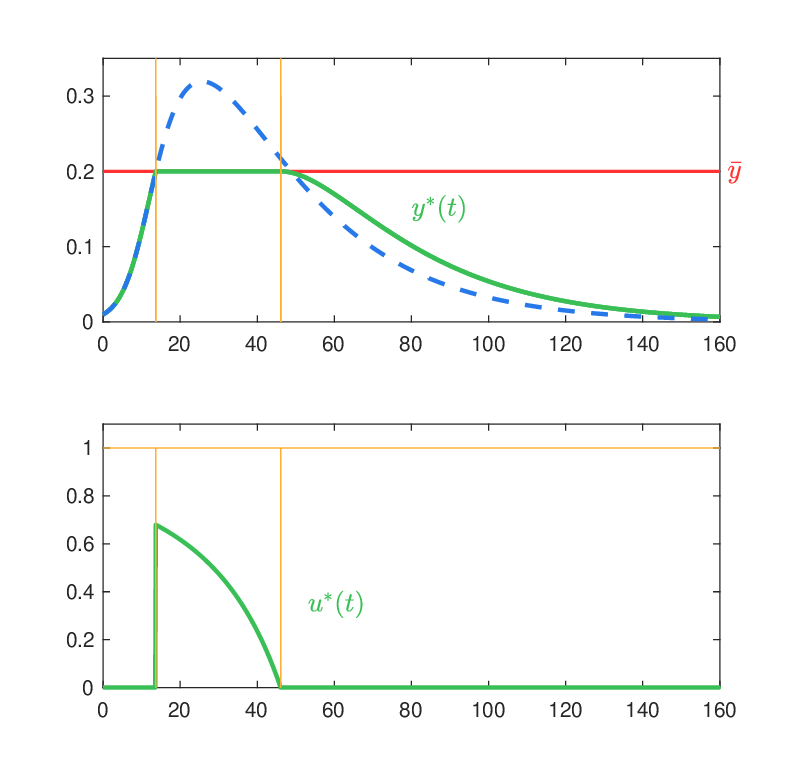}
		\caption[Controlled behavioral-feedback SIR epidemic model]{Numerical simulation of the CBF-SIR epidemic model~\eqref{control-system} with $\beta(x,y)= 0.35 x(1-y)$ and $\gamma=0.05$. The top plot shows in blue the uncontrolled solution and in green the solution corresponding to the optimal control signal $u^*(t)$~\eqref{feedback}-\eqref{eq:mu} with threshold $\bar y = 0.2$. The optimal control signal is reported in the bottom plot.}
		\label{fig:control-policy}
	\end{figure} 
	
	The optimal feedback policy $u^*(t)$ allows the disease to spread without control until the fraction of infected agents hits the threshold $\bar{y}$ at a certain time $T_0$. From there on, the dynamics read
	\be\label{control-system-hit}
	\begin{cases}
		\dot x^* = -\gamma\bar y	\\
		\dot y^* = 0\,,
	\end{cases}
	\ee
	hence the fraction of infected remains constantly at the threshold $\bar y$. At time $T_1$ such that $R(x^*(T_1),\bar y) = 1$ (or equivalently, $\rho(x^*(T_1),\bar y) = 0$), the control is released, so that the trajectory follows the uncontrolled dynamics and the infection curve naturally declines from there on. The corresponding control signal $u^*(t)$ starts abruptly above $0$ at time $T_0$ and subsequently is adjusted in time to maintain the fraction of infected exactly at the threshold $\bar y$ until $T_1$. At that point, the control is deactivated. This policy is known in the literature as \emph{filling-the-box} strategy and its optimality relies on the \tcb{monotonicity of the reproduction number, which implies the unimodality of the infection curve.} 
	Our result generalizes \cite{Miclo.ea:2022}, which proved that the filling-the-box strategy is the optimal control for the classical SIR epidemic model, which is a special case of the CBF-SIR epidemic model, as noted in Example \ref{ex2}.
	
	A numerical simulation of the CBF-SIR epidemic model \eqref{control-system} with $\beta(x,y)= 0.35 x(1-y)$, $\gamma = 0.05$, and optimal control signal $u^*(t)$ \eqref{feedback}-\eqref{eq:mu} with threshold $\bar y = 0.2$ is presented in Figure~\ref{fig:control-policy}. The top plot illustrates in green the solution $y^*(t)$ corresponding to the optimal control signal $u^*(t)$, which is reported in the bottom plot. The dashed blue curve is the corresponding uncontrolled solution, namely, the solution of the CBF-SIR epidemic model \eqref{control-system} with $u \equiv 0$.
	
	The next section analyzes the property of the CBF-SIR epidemic model without control. This analysis is instrumental for the proof of Theorem~\ref{theo:theo-control}.
	
	\section{Uncontrolled BF-SIR epidemic model}\label{sec:uncontrolled-model}
	In this section, we analyze the CBF-SIR epidemic model \eqref{control-system} with $u\equiv0$, i.e., the dynamical system
	\be\label{uncontrol-system}\begin{cases}
		\dot x=-\gamma R(x,y)y\,,\\
		\dot y=\gamma[R(x,y)-1]y\,.\\
	\end{cases}\ee
	
	We shall refer to this system of ODE as the (uncontrolled) BF-SIR epidemic model.
	
	\subsection{Unimodality of the infection curve}
	We start by proving that the infection curve of the uncontrolled BF-SIR epidemic model \eqref{uncontrol-system} under condition \eqref{condition2} is unimodal, namely, it admits at most one peak. This result generalizes \cite[Theorem 1]{Alutto2021OnSE}.
	\begin{proposition}\label{prop:peak} Consider the uncontrolled BF-SIR epidemic model \eqref{uncontrol-system} with initial condition \eqref{initial-state} and let \eqref{condition2} hold true. Then:
		\begin{enumerate}
			\item[(i)] if $R(x_0,y_0)\le1$, then $t\mapsto y(t)$ is strictly decreasing for $t\ge0$;
			\item[(ii)]if $R(x_0,y_0)>1$, then there exists a finite time $\hat t$ such that $t\mapsto y(t)$ is strictly increasing for $t$ in $[0,\hat t]$ and strictly decreasing for $t\ge\hat t$;
		\end{enumerate}
	\end{proposition}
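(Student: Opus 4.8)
The plan is to study the sign of $h(t):=R(x(t),y(t))-1$ along the trajectory. The statement is vacuous when $y_0=0$, so I take $y_0>0$; then $y(t)>0$ for all $t\ge 0$ by Proposition~\ref{prop:general}(ii), and since $u\equiv 0$ the solution $(x,y)$ is globally $\mc C^1$, hence so is $h$. The second equation of \eqref{uncontrol-system} reads $\dot y=\gamma\,h(t)\,y(t)$, so $\dot y(t)$ has the sign of $h(t)$: $y$ is strictly increasing on intervals where $h>0$ and strictly decreasing on intervals where $h<0$. The whole proposition therefore reduces to showing that the sign of $h$ on $[0,+\infty)$ changes at most once, and only from positive to negative, with no change occurring precisely when $h(0)=R(x_0,y_0)-1\le 0$.

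The key fact I would establish is that $h$ is strictly decreasing at each of its zeros. At any $t_0$ with $h(t_0)=0$ one has $R(x(t_0),y(t_0))=1$ and $\dot y(t_0)=0$, so, using $\dot x=-\gamma R y$,
\[\dot h(t_0)=\left(R_x\dot x+R_y\dot y\right)\big|_{t_0}=-\gamma\,R_x\,y\,\big|_{t_0}<0,\]
since $R_x>0$ on $\mc S$ by Remark~\ref{remark:reproduction} (equivalently, by \eqref{condition2}) and $y(t_0)>0$. From this I would deduce, by a standard infimum argument, that once $h$ attains a nonpositive value it stays strictly negative thereafter: if $h(t_1)\le 0$ but $h(t_2)\ge 0$ for some $t_2>t_1$, then $t_\star:=\inf\{t>t_1:h(t)\ge 0\}$ satisfies $t_\star>t_1$, $h(t_\star)=0$ and $h<0$ on $(t_1,t_\star)$, which forces $\dot h(t_\star)\ge 0$ and contradicts the display. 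Hence $h$ changes sign at most once, from positive to negative.

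It then remains to put the two cases together. If $R(x_0,y_0)\le 1$ then $h(0)\le 0$, so $h<0$ on $(0,+\infty)$ and $y$ is strictly decreasing on $[0,+\infty)$, giving (i). If $R(x_0,y_0)>1$ then $h(0)>0$, and I would first argue that $h$ must become nonpositive in finite time: otherwise $y$ would be strictly increasing, hence bounded below by $y_0>0$, contradicting the convergence $(x(t),y(t),z(t))\to\Delta_e$ --- so in particular $y(t)\to 0$ --- asserted in Proposition~\ref{prop:general}(iv). Therefore $\hat t:=\inf\{t\ge 0:h(t)\le 0\}$ is finite and positive, with $h>0$ on $[0,\hat t)$ and $h(\hat t)=0$, so $y$ is strictly increasing on $[0,\hat t]$; and by the previous paragraph $h<0$ on $(\hat t,+\infty)$, so $y$ is strictly decreasing for $t\ge\hat t$, giving (ii).

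I expect the only genuinely delicate point to be the passage from ``$h$ is decreasing at every zero'' to ``$h$ crosses the level $0$ at most once'' --- i.e.\ ruling out a tangential return of $h$ to $0$ that would produce a second peak --- which is exactly what the infimum argument above handles. Everything else (smoothness of $h$, positivity of $y$, finiteness of the peak time $\hat t$ via global convergence) is supplied directly by Proposition~\ref{prop:general}, and the sign computation $\dot h|_{h=0}=-\gamma R_x y<0$ uses only the monotonicity \eqref{condition2} of $R$ in $x$, not \eqref{condition3}.
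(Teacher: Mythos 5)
Your proposal is correct and follows essentially the same route as the paper: compute $\dot R$ along trajectories, observe that $\dot R=-\gamma R_x y<0$ at any point where $R=1$ (using \eqref{condition2} and $y>0$), conclude that $R$ can cross the level $1$ at most once and only downward, and obtain finiteness of $\hat t$ from the convergence $y(t)\to 0$ in Proposition~\ref{prop:general}(iv). Your explicit infimum argument just spells out the "no tangential return" step that the paper leaves implicit.
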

	\begin{proof} 
		It follows from the definition of the reproduction number \eqref{eq:reproduction} \tcb{and from \eqref{uncontrol-system}} that
		\begin{equation}\label{dotR}
			\tcm{\dot R = - R_x \gamma Ry + R_y\gamma (R-1)y\,.}
		\end{equation}
		Note that, for a given time $\hat t$ in which $R(x(\hat t),y(\hat t)) = 1$, this equation, together with \eqref{condition2}, implies that $\dot {R}(x(\hat t),y(\hat t)) < 0$. 
		Hence, if $R(x_0,y_0) \le 1$, it must be $R(x(t),y(t))<1$ for every $t>0$, so that $y(t)$ is necessarily strictly decreasing for every positive time $t \ge 0$, proving item (i). To prove item (ii), let $R(x_0,y_0)>1$ and assume by contradiction that $R(x(t),y(t))>1$ for every $t \ge 0$. Then, $y(t)$ would be strictly increasing for all times and could not converge to zero, contradicting Proposition~\ref{prop:general}(iv). Therefore, $\hat t=\min\{t\geq 0\,|\, R(x(t),y(t))=1\}$ must exist and has to be finite, otherwise the fraction of recovered agents would diverge in time due to the third equation in \eqref{control-system-full}, violating \tcb{Proposition~\ref{prop:general}}. Hence, $y(t)$ is strictly increasing \tcb{in $[0,\hat t]$}. For $t \ge \hat t$, item (i) implies that $y(t)$ is strictly decreasing, proving item (ii).
	\end{proof}
			
			\subsection{Some geometric considerations}
			We analyze the dynamics of the uncontrolled BF-SIR model \eqref{uncontrol-system} in presence of a threshold $\bar y$ in $(0,1]$. Our goal is to determine which trajectories remain below this threshold without requiring external intervention. 	
			
			It follows from \cite[Chapter 2]{thesisMartina} that, for every initial condition $(x_0,y_0)$ in $\mc S$, the uncontrolled BF-SIR model \eqref{uncontrol-system} admits a unique solution of class $\mc C^1$ in $\mc S$. Consequently, there exists a map $\phi:\R_+\times\mc S\to\mc S$ such that $(x_0, y_0)\mapsto \phi(t,x_0,y_0)$ and $t\mapsto \phi(t, x_0, y_0)$ are both of class $\mc C^1$ and $t\mapsto \phi(t, x_0, y_0)$ is the unique solution of the uncontrolled BF-SIR model \eqref{uncontrol-system}  with $\phi(0, x_0, y_0)=(x_0,y_0)$ in $\mc S$. The positive semi-orbit for the initial condition $(x_0,y_0)$ is defined as $$\Gamma^+(x_0,y_0) = \{\tcb{\phi(t,x_0,y_0), t \in \R_+}\}\,.$$
			For every state $(x,y)$ in $\mc S$, we also define the set
			$$\Gamma^-(x,y) = \{(x_0,y_0)\in\mc S:\, (x,y)\in\Gamma^+(x_0,y_0)\}\,,$$
			of initial conditions in $\mc S$ whose positive semi-orbit contains $(x,y)$. With a slight abuse of terminology, we shall refer to $\Gamma^-(x,y)$ as the negative semi-orbit of $(x,y)$ for the uncontrolled BF-SIR model \eqref{uncontrol-system}. \smallskip
			
			\begin{figure}
				\includegraphics[width=9cm, height=7.5cm]{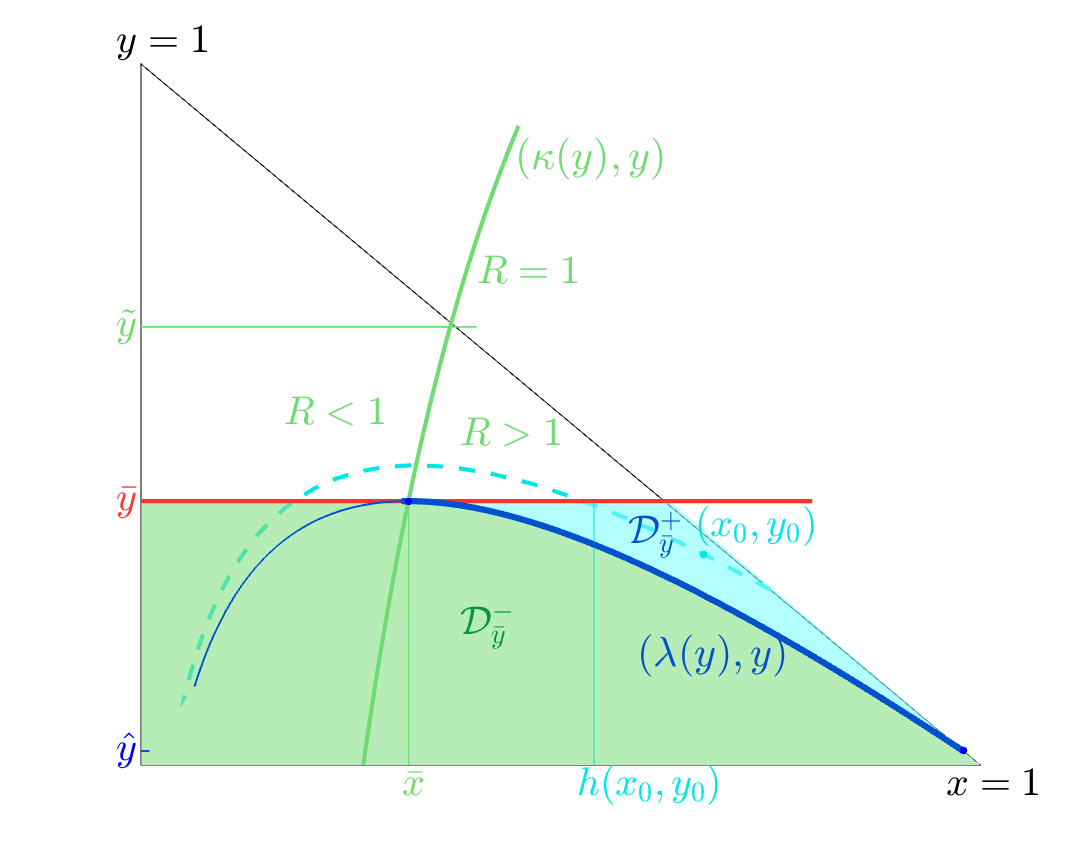}
				\caption{Phase portrait of the uncontrolled BF-SIR model \eqref{uncontrol-system} on the state space $\mc S$, with recovery rate $\gamma=0.05$ and infection rate \eqref{eq:example2} with $b(x)=(x+2)/10$ and $a =0.5$. 
				}
				\label{fig:phase}
			\end{figure}
			
			Let $\mc D_{\bar y}$ denote the set of states that the trajectory can visit without violating the threshold constraint, that is
			$$
			\mc D_{\bar y} = \{(x,y) \in \mc S: 0 < y \le \bar y\}\,.
			$$
			The first part of our analysis leads to a decomposition of $\mc D_{\bar y}$ into two sets $\mc D_{\bar y}^-$ and $\mc D_{\bar y}^+$, which will be formally defined after Lemma~\ref{lemma:lambda}. Informally, $\mc D_{\bar y}^-$ contains the initial conditions from which the fraction of infected remains below the threshold $\bar y$ without requiring an external intervention and is the largest positively invariant set contained in $\mc D_{\bar y}$, as established by Lemma~\ref{lemma:D0}. In contrast, $\mc D_{\bar y}^+ = \mc D_{\bar y} \setminus \mc D_{\bar y}^-$ contains all initial conditions such that the solution of the uncontrolled dynamics would exceed the threshold $\bar y$, thus requiring an external intervention.
			The two regions are \tcb{illustrated in Figure~\ref{fig:phase}}. 
			
			The following statement characterizes the set of states such that $R(x,y)=1$ and is instrumental for the definition of such sets.
			\begin{lemma}\label{lemma:kappa} 
				Let Assumption~\ref{ass:ass1} hold true and assume that
				\be\label{Rmax>1} \beta(1,0)>\gamma\,.\ee
				Then, there exists a unique $\tilde y > 0$ \tcb{such that $R(1-\tilde y,\tilde y) = 1$} and a $\mc C^1$ function $\kappa : [0,\tilde y] \to [0,1]$ such that
				$$\{(x,y)\in\mc S:\,R(x,y)=1\}=\{(\kappa(y),y):\,y\in [0,\tilde y] \}\,.$$
			\end{lemma}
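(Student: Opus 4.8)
The plan is to study the function $y \mapsto R(1-y, y)$ on the interval $[0,1]$ and show it crosses the value $1$ exactly once, then use the implicit function theorem to solve $R(x,y)=1$ for $x$ as a function of $y$. First I would set $g(y) = R(1-y,y)$ for $y \in [0,1]$, which is the restriction of $R$ to the hypotenuse of $\mc S$. Using Remark~\ref{remark:reproduction}, $R$ is increasing in $x$ and non-increasing in $y$, so moving along the hypotenuse in the direction of increasing $y$ decreases both arguments' contributions: $g'(y) = -R_x(1-y,y) + R_y(1-y,y) < 0$ (the first term is strictly negative by \eqref{condition2}, the second nonpositive by \eqref{condition3}). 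Hence $g$ is strictly decreasing. Since $g(0) = R(1,0) = \beta(1,0)/\gamma > 1$ by hypothesis \eqref{Rmax>1}, and $g(1) = R(0,1) = 0 < 1$ because $R(0,y) = \beta(0,y)\cdot 0/\gamma = 0$, the intermediate value theorem gives a unique $\tilde y \in (0,1)$ with $g(\tilde y) = 1$, i.e. $R(1-\tilde y, \tilde y) = 1$. Note $\tilde y > 0$ as required.

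Next I would construct $\kappa$. Consider the level set $\{(x,y) \in \mc S : R(x,y) = 1\}$. For each fixed $y$, the map $x \mapsto R(x,y)$ is continuous and strictly increasing (by \eqref{condition2}, since $\partial_x R = (x\beta_x + \beta)/\gamma > 0$), with $R(0,y) = 0 < 1$. So for each $y$ there is at most one $x$ with $R(x,y)=1$, and it exists iff $R(1-y,y) \ge 1$ when we also require $(x,y)\in\mc S$, i.e. $x \le 1-y$; actually one must check $R$ evaluated at the largest admissible $x$, namely $x=1-y$, is $\ge 1$. By the monotonicity of $g$ established above, $R(1-y,y) = g(y) \ge 1$ precisely for $y \in [0,\tilde y]$. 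Therefore, for each $y \in [0,\tilde y]$ there is a unique $\kappa(y) \in [0, 1-y] \subseteq [0,1]$ with $R(\kappa(y), y) = 1$, and for $y > \tilde y$ no such point exists in $\mc S$. This gives the set equality $\{(x,y)\in\mc S : R(x,y)=1\} = \{(\kappa(y),y) : y \in [0,\tilde y]\}$.

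Finally I would establish that $\kappa$ is of class $\mc C^1$. Since $\beta$ is $\mc C^1$ (indeed $\mc C^2$ in Theorem~\ref{theo:theo-control}, but $\mc C^1$ suffices here), $R$ is $\mc C^1$ on $\mc S$, and $\partial_x R(\kappa(y),y) = (\kappa(y)\beta_x(\kappa(y),y) + \beta(\kappa(y),y))/\gamma > 0$ is nonzero by \eqref{condition2}. The implicit function theorem applied to $F(x,y) = R(x,y) - 1$ at each point $(\kappa(y),y)$ yields a $\mc C^1$ local solution, and by uniqueness these local solutions glue to the single $\mc C^1$ function $\kappa$ on $[0,\tilde y]$ (at the endpoints one uses one-sided versions). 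The explicit formula $\kappa(y)\beta(\kappa(y),y) = \gamma$ can also be recorded.

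The main obstacle is the bookkeeping at the boundary: one must verify that the candidate $\kappa(y)$ actually lies in $\mc S$, i.e. $\kappa(y) + y \le 1$, which is exactly the content of $R(1-y,y) \ge 1 \Leftrightarrow y \le \tilde y$ combined with strict monotonicity of $x \mapsto R(x,y)$; and one must handle the endpoints $y=0$ (where $\kappa(0)$ solves $\beta(\kappa(0),0)\kappa(0)=\gamma$, well-defined since $\beta(1,0) > \gamma$ and $R(0,0)=0$) and $y = \tilde y$ (where $\kappa(\tilde y) = 1 - \tilde y$ sits on the hypotenuse) using one-sided derivatives for the $\mc C^1$ claim. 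None of these steps is deep, but they need to be stated carefully to make the set equality airtight.
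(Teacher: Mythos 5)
Your proof is correct and follows essentially the same route as the paper: intermediate value theorem along the hypotenuse $y\mapsto R(1-y,y)$ to locate $\tilde y$, strict monotonicity of $x\mapsto R(x,y)$ from \eqref{condition2} together with $R(0,y)=0$ to get a unique $\kappa(y)$ for $y\le\tilde y$, and the implicit function theorem for the $\mc C^1$ regularity. Your explicit computation $g'(y)=-R_x+R_y<0$ actually makes the uniqueness of $\tilde y$ more transparent than the paper's version, but the substance is identical.
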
\smallskip
			\begin{proof} 
				Note from \eqref{eq:reproduction} that $R(0,1)=0$, while $R(1,0)>1$ due to \eqref{Rmax>1}. Therefore, since $y\mapsto R(1-y,y)$ is continuous on $[0,1]$, there exists $\tilde y$ in $(0,1)$ such that $R(1-\tilde y,\tilde y)=1$. 
				\tcb{
					For every $y \in [0,\tilde y]$, note that $R(0,y) = 0$ by \eqref{eq:reproduction}. Moreover, for every such $y$, since $R(1-\tilde y,\tilde y) = 1$, it follows from \eqref{condition2}-\eqref{condition3} that $R(1-y,y) \ge 1$. Therefore, \eqref{condition2} implies the existence of a unique $x = \kappa(y)$} such that $R(\kappa(y),y)=1$. The fact that $\kappa$ is of class $\mc C^1$ follows from the Implicit Function Theorem~\cite[Theorem 9.18]{rudin1976principles} and from \eqref{condition2}.
			\end{proof}\smallskip
			
			\begin{remark}
				The function $\kappa$ maps $y$ in $[0,\tilde y]$ into the unique fraction of susceptible agents $\kappa(y)$ in $[0, 1-y]$ such that $R(\kappa(y),y)=1$. In the classical SIR model, \tcb{$\tilde y = 1 - \gamma/\beta$ and $\kappa(y) = \gamma/\beta$ for every $y$ in $[0, \tilde y]$}. 
			\end{remark}
			
			Figure~\ref{fig:phase} illustrates in green the curve \tcb{$x = \kappa(y)$. Due to the monotonicity of the reproduction number, this curve splits $\mc S$} in two areas, the left-hand one with $R(x,y)<1$ and the right-hand one with $R(x,y)>1$. 
			
			Now, let $\bar x=\kappa(\bar y)$. It is useful in our analysis to consider the negative semi-orbit $\Gamma^-(\bar x,\bar y)$ for the uncontrolled BF-SIR model \eqref{uncontrol-system}. The following result gives information on $\Gamma^-(\bar x,\bar y)$ and permits the definition of $\mc D_{\bar y}^+$ and $\mc D_{\bar y}^-$.
			\begin{lemma}\label{lemma:lambda}\label{lemma:negorb} Let Assumption~\ref{ass:ass1} and \eqref{Rmax>1} hold true and assume that $\bar y \le \tilde y$. Then, there exists $\hat y$ in $[0, \bar y]$ and a $\mc C^1$ decreasing function $\lambda:[\hat y,\bar{y}] \rightarrow\tcb{[\bar{x},1]}$ such that:
				\begin{enumerate}
					\item[(i)] 
					$\ov{\Gamma^-(\bar x,\bar y)}=\{(\lambda(y),y):\,\hat y\leq y\le\bar y\}\,;$
					\item[(ii)] \tcb{there exists $(x,y)$ in $\Gamma^{-}(\bar x,\bar y)$ with $x+y=1$ if and only if $\hat y>0$. In such a case, $\hat y$ is the unique element in $(0,\bar y]$ such that $\lambda(\hat y)+\hat y=1$;}
					\item[\tcb{(iii)}] $\lambda(y) > \kappa(y)$ for every $y$ in $[\hat y,\bar{y})$ and $\lambda(\bar y)=\kappa(\bar y)=\bar x$.
					In particular, $R(\lambda(y),y)>1$ for every $[\hat y,\bar{y})$.
				\end{enumerate}
			\end{lemma}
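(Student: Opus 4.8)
The plan is to realize $\ov{\Gamma^-(\bar x,\bar y)}$ as the graph of a single decreasing function, by parametrizing the orbit through $(\bar x,\bar y)$ by the variable $x$. Since $\dot x<0$ strictly along every solution of \eqref{uncontrol-system} with $x,y>0$ (Proposition~\ref{prop:general}(i)) and $R(x,y)=\beta(x,y)x/\gamma$, eliminating time reduces the orbit equation to the scalar ODE $\de y/\de x=-\rho(x,y)$, whose right-hand side is of class $\mc C^1$ on $\{(x,y)\in\mc S:x>0\}$ because $\beta$ is $\mc C^1$ and $x\beta>0$ there. First I would introduce the maximal solution $\psi$ of this ODE with $\psi(\bar x)=\bar y$, defined on an interval $[\bar x,b)$; as $\beta$ (hence $\rho$) is defined only on $\mc S$, the graph of $\psi$ must remain in $\mc S$, so $b\le 1$ and, as $x\uparrow b$, the graph approaches $\partial\mc S$.

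Next I would establish two monotonicity facts. Since $\rho=1-1/R$ with $R_x>0$ by \eqref{condition2}, we have $\rho(\bar x,\bar y)=0$ while $\rho_x(\bar x,\bar y)>0$, so $\psi$ has a strict local maximum at $\bar x$; a continuation argument using that $R$ is increasing in $x$ and non-increasing in $y$ (Remark~\ref{remark:reproduction}) then yields $R(x,\psi(x))>1$, hence $\rho(x,\psi(x))\in(0,1)$ and $\psi'(x)=-\rho(x,\psi(x))<0$, for every $x\in(\bar x,b)$. Thus $\psi$ is strictly decreasing and $\mc C^1$ on $[\bar x,b)$, and I let $\lambda$ be its inverse: a decreasing $\mc C^1$ function with $\lambda(\bar y)=\bar x$. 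Moreover $\frac{\de}{\de x}\big(x+\psi(x)\big)=1-\rho(x,\psi(x))=\gamma/\big(x\beta(x,\psi(x))\big)>0$, so $x\mapsto x+\psi(x)$ is strictly increasing on $[\bar x,b)$ and bounded above by $1$. Combining the two facts, the graph of $\psi$ can leave $\mc S$ only through the edge $\{y=0\}$ (when $\psi$ reaches $0$ at some $x_0\le 1$) or through the hypotenuse $\{x+y=1\}$ (when $x+\psi(x)$ reaches $1$ at a necessarily unique $x^*\le 1$), whichever occurs for the smaller value of $x$. I then set $\hat y=0$ in the first case and $\hat y=1-x^*$ in the second, extend $\lambda$ continuously to $[\hat y,\bar y]$ by $\lambda(\hat y)=x_0$, resp.\ $\lambda(\hat y)=x^*$, and note $\lambda([\hat y,\bar y])\subseteq[\bar x,1]$.

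The third step identifies $\Gamma^-(\bar x,\bar y)$. By uniqueness of solutions, the maximal backward orbit of $(\bar x,\bar y)$ traces exactly the graph $\{(x,\psi(x)):x\in[\bar x,b)\}$, so each such point belongs to $\Gamma^-(\bar x,\bar y)$, and conversely every point of $\Gamma^-(\bar x,\bar y)$ lies on this orbit with $x$-coordinate $\ge\bar x$, hence is of this form up to the exit point. If the exit is through the hypotenuse at $(x^*,1-x^*)$ with $x^*<1$, then the vector field of \eqref{uncontrol-system} there is nonzero (because $1-x^*>0$), so this point is reached in finite backward time and lies in $\Gamma^-(\bar x,\bar y)$; if instead the exit is through $\{y=0\}$ at $(x_0,0)$ (which also covers $x^*=1$), then, $\{y=0\}$ being invariant for \eqref{uncontrol-system}, a solution with $y>0$ can approach $(x_0,0)$ only as $t\to-\infty$, so $(x_0,0)\notin\Gamma^-(\bar x,\bar y)$ but lies in its closure. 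In either case $\ov{\Gamma^-(\bar x,\bar y)}=\{(\lambda(y),y):\hat y\le y\le\bar y\}$, proving (i). Part (ii) is then the very dichotomy just used: a point of $\Gamma^-(\bar x,\bar y)$ with $x+y=1$ exists iff the exit is through the hypotenuse with $x^*<1$, i.e.\ iff $\hat y>0$; and such $\hat y$ is the unique element of $(0,\bar y]$ with $\lambda(\hat y)+\hat y=1$, because $\frac{\de}{\de y}\big(\lambda(y)+y\big)=1-1/\rho(\lambda(y),y)<0$. For (iii), the inequality $R(x,\psi(x))>1$ established above, together with $R(\kappa(y),y)=1$ and the strict monotonicity of $R$ in $x$, gives $\lambda(y)>\kappa(y)$ for $y\in(\hat y,\bar y)$ — and for $y=\hat y$ as well, since $R(\lambda(\hat y),\hat y)>R(\bar x,\hat y)\ge R(\bar x,\bar y)=1$ — while $\lambda(\bar y)=\bar x=\kappa(\bar y)$ by definition of $\bar x$.

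The main obstacle I anticipate is Step~3's exit analysis: proving that the two scenarios (reaching the hypotenuse in finite backward time versus approaching the $\{y=0\}$ edge only asymptotically) are exhaustive, and correctly deciding in each case whether the limit point belongs to $\Gamma^-(\bar x,\bar y)$ or only to its closure. The strict monotonicity of $x+\psi(x)$ from Step~2 is the tool that makes the dichotomy clean and rules out the orbit wandering indefinitely inside $\mc S$ or limiting to an interior equilibrium. The degenerate case $\bar y=\tilde y$, in which $\bar x+\bar y=1$ so that $(\bar x,\bar y)$ already lies on the hypotenuse, $\Gamma^-(\bar x,\bar y)=\{(\bar x,\bar y)\}$, and $\hat y=\bar y$, can be dispatched separately at the outset.
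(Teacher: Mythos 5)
Your proof is correct and rests on the same geometric skeleton as the paper's: establish that the reproduction number exceeds one along the negative semi-orbit of $(\bar x,\bar y)$, deduce strict monotonicity of both coordinates, represent the orbit as a decreasing graph, and classify how it exits $\mc S$. The technical realization differs in three useful ways. First, instead of running the time-reversed planar system and invoking Proposition~\ref{prop:peak} to get $R>1$ along the backward trajectory, you eliminate time and work with the orbit ODE $\de y/\de x=-\rho(x,y)$, proving $R(x,\psi(x))>1$ by a self-contained continuation argument ($g(x)=R(x,\psi(x))$ satisfies $g(\bar x)=1$ and $g'=R_x-R_y\rho>0$ wherever $g\ge 1$, so it cannot return to $1$); this decouples the lemma from the unimodality result. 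Second, the observation that $\frac{\de}{\de x}\bigl(x+\psi(x)\bigr)=\gamma/(x\beta)>0$ makes the exit dichotomy (edge $\{y=0\}$ versus hypotenuse) cleaner than the paper's alternative ``stays in $\mc S$ forever and converges to an equilibrium, or hits the boundary in finite time,'' and it also yields the uniqueness of $\hat y$ in item (ii) via $\frac{\de}{\de y}\bigl(\lambda(y)+y\bigr)=1-1/\rho<0$, a point the paper asserts without an explicit argument. Third, your analysis of whether the limit point lies in $\Gamma^-(\bar x,\bar y)$ or only in its closure (finite backward time to a non-equilibrium boundary point, versus asymptotic approach to the invariant edge $\{y=0\}$) coincides with the paper's. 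The one caveat, which your construction makes explicit but which is equally present in the paper's own argument, concerns the claimed $\mc C^1$ regularity of $\lambda$ at the endpoint $\bar y$: since $\lambda'(y)=-1/\rho(\lambda(y),y)$ and $\rho(\bar x,\bar y)=0$, the derivative blows up there, so strictly speaking $\lambda$ is $\mc C^1$ only on $[\hat y,\bar y)$; this does not affect how the lemma is used later.
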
\smallskip
			\begin{proof}
				We define the \textquotedblleft backwards" dynamics
				\be\label{eq:backward}
				\begin{cases}
					\dot x = \beta(x,y)xy,\\
					\dot y = - \beta(x,y)xy + \gamma y\,,
				\end{cases}
				\ee
				and let $(x(t),y(t))=\phi^-(t,\bar x,\bar y)$ denote the unique solution of \eqref{eq:backward} at time $t$ in $\R_+$ with initial condition $(x(0),y(0)) = (\bar x,\bar y)$.
				Note that \eqref{eq:backward} is obtained from \eqref{uncontrol-system} by letting $t \to -t$, so that 
				\be\label{eq:symmetry}
				\tcb{\Gamma^-(\bar x,\bar y) = \{\phi^-(t,\bar x,\bar y): t \in \R_+\}\,.}
				\ee 
				Note that $R(x(0),y(0)) = R(\bar x,\bar y) = 1$. Hence, Proposition \ref{prop:peak} and \eqref{eq:symmetry} imply that necessarily 
				\be\label{eq:R_back}
				R(x(t),y(t))>1\,, \quad \forall t > 0\,.
				\ee 
				Therefore, the trajectory enters the set of points with \tcb{reproduction number larger than $1$} for positive times $t$ and this set can be left only if the trajectory leaves \tcb{$\mc S$} (we remark that \tcb{$\mc S$} is not positively invariant under the backward dynamics). This in turn implies that $y(t)$ is strictly decreasing, while $x(t)$ is strictly increasing while in $\mc S$. 
				Now, note that either $(x(t), y(t))$ belongs to $\mc S$ for all positive times $t$, in which case the solution converges to a point in the boundary of $\mc S$ as $t\to +\infty$, or it hits the boundary of $\mc S$ at some finite time. Let us indicate the convergence or hit point by $(\hat x, \hat y)$. We note that, in the first case, such point must be an equilibrium, so that necessarily $\hat y=0$ from Proposition~\eqref{prop:general} (note that the equilibrium points of the backward dynamics coincide with the ones of the original dynamics). In the second case, instead, such a point $(\hat x, \hat y)$ must lie on the boundary of $\mc S$, with $\hat y$ in $(0, \bar y\tcb{]}$ and $\hat x = 1-\hat y$, \tcb{so that $\hat x + \hat y = 1$}. 
				In both cases, the strict monotonicity of $y(t)$ ensures that the first component of the orbit $\Gamma^-(\bar x,\bar y)$ can be represented as the graph of a function of $y$, denoted $\lambda(y):=x(y)$ and defined on $[\hat y,\bar y]$. 
				By construction, the function $\lambda$ is of class $\mc C^1$ since both $x(t)$ and $y(t)$ are $\mc C^1$ and \tcm{$y(t)$ is strictly decreasing}. Moreover, $\lambda$ is a strictly decreasing function because the strict monotonicity of $x(t)$ and $y(t)$ implies $dx /dy = \dot x/ \dot y < 0$. This proves item (i). \tcb{Item (ii)} follows from the fact that $\hat y>0$ if and only if $(x(t),y(t))$ hits the boundary of $\mc S$, so that necessarily $\lambda(\hat y) + \hat y = 1$. 
				The fact that $\lambda(\bar y) = \kappa(\bar y) = \bar x$ follows by construction. The fact that $R(\lambda(y),y)>1$ for every $[\hat y,\bar y]$ follows from \eqref{eq:R_back}, proving item (iii). 
			\end{proof}\smallskip
			
			The negative semi-orbit $\Gamma^-(\bar x,\bar y)$ is represented by the thick blue curve in Figure~\ref{fig:phase}. The green and the light-blue area are, respectively, the sets $\mc D_{\bar y}^-$ and $\mc D_{\bar y}^+$. \tcb{If the negative semi-orbit that starts from $(\bar x,\bar y)$ intercepts a point $(x,y)$ on the right boundary of $\mc S$, then \tcm{$\hat y > 0$, otherwise $\hat y = 0$.}}
			
			Lemma~\ref{lemma:negorb} allows for the partition of $\mc D_{\bar y}$ in two subsets
			\be\label{D+}
			\ba{rcl}
			\mc D_{\bar y}^+ & = & \{(x,y)\in \mc D_{\bar y} : y\in [\hat y,\bar y], x>\lambda(y)\}\,, \\[6pt]
			\mc D_{\bar y}^- & = & \mc D_{\bar y} \setminus\mc D_{\bar y}^+.
			\ea
			\ee
			\begin{remark}\label{remark:D+}
				\tcb{Note that, since $R(\lambda(y),y)>1$ for every $y$ in $[\hat y,\bar y)$ by Lemma \ref{lemma:lambda}(iii) and since $R(\bar x,\bar y) = 1$, $\mc D_{\bar y}^+$ contains all points with reproduction number larger than $1$ due to the monotonicity condition \eqref{condition3}. \tcm{Therefore, this yields that $\rho(x,y)>0$ for all $(x,y)$ in $\mc D_{\bar y}^+$.}}
			\end{remark}
			\begin{remark}\label{remark:D}
				Let Assumption~\ref{ass:ass1} hold true. The analysis above holds under \eqref{Rmax>1} and $\bar y \leq \tilde y$. If \eqref{Rmax>1} does not hold, $\tilde y$ and $\hat y$ are not defined and $R(x,y)\leq 1$ for every $(x,y)$ in $\mc S$. Hence, the set $\mc D_{\bar y}^+$ is empty \tcb{by Remark \ref{remark:D+}} and $\mc D_{\bar y} = \mc D_{\bar y}^-$.
				Instead if \eqref{Rmax>1} holds and $\bar y > \tilde y$, then $\bar x$ does not exist, therefore neither $\hat y$, the function $\lambda$ and the sets $\mc D_{\bar y}^+$ and $\mc D_{\bar y}^-$ are well defined. In both cases, the optimal control is trivial because the uncontrolled trajectory will stay in $\mc D_{\bar y}$.
			\end{remark}
			
			
			\subsection{Dynamic behavior of the uncontrolled BF-SIR epidemic model}
			In this part, we discuss the behavior of the uncontrolled BF-SIR epidemic model \eqref{uncontrol-system} with initial conditions either in $\mc D_{\bar y}^-$ or in $\mc D_{\bar y}^+$. In the first case, we show that the orbit remains inside $\mc D_{\bar y}^-$ and thus, in particular, the number of infected agents never exceeds $\bar y$. In this case, $u \equiv 0$ for all $t$, is the optimal control. In contrast, if the initial condition lies in $\mc D_{\bar y}^+$, \tcb{the trajectory will leave $\mc D_{\bar y}$ at some finite time, violating the threshold constraint.} We also show that the abscissa of the hitting point is a regular function of the initial condition, a fact that will play a crucial role in the proof of Theorem \ref{theo:theo-control}.
			\begin{lemma}\label{lemma:D0} Let Assumption~\ref{ass:ass1} hold true. Then, the set $\mc D^-_{\bar y}$ is positively invariant for the uncontrolled BF-SIR epidemic model \eqref{uncontrol-system}.
			\end{lemma}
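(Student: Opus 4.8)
The plan is to prove that no orbit of \eqref{uncontrol-system} that starts in $\mc D_{\bar y}^-$ can ever reach a state outside $\mc D_{\bar y}^-$, arguing by contradiction. Suppose $(x_0,y_0)\in\mc D_{\bar y}^-$ but the solution $\phi(\cdot,x_0,y_0)$ leaves $\mc D_{\bar y}^-$; let $\tau\ge 0$ be the first exit time and $P:=\phi(\tau,x_0,y_0)$. By Proposition~\ref{prop:general} the orbit stays in $\mc S$ with $y(t)>0$ for all $t$, and since $\mc D_{\bar y}^-\cup\mc D_{\bar y}^+=\{(x,y)\in\mc S:0<y\le\bar y\}$, continuity forces $P$ to lie on the relative boundary of $\mc D_{\bar y}^-$, that is on one of the following pieces: (a) the orbit arc $\ov{\Gamma^-(\bar x,\bar y)}=\{(\lambda(y),y):\hat y\le y\le\bar y\}$; (b) the ceiling segment $\{(x,\bar y):0\le x\le\bar x\}$; (c) the part of $\partial\mc S$ consisting of $\{x=0,\,0<y\le\bar y\}$ and, when $\hat y>0$, $\{x+y=1,\,0<y\le\hat y\}$. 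The argument then consists in ruling out each of these locations.

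The tools I would use are the following. First, $\dot x+\dot y=-\gamma y<0$, so $x+y$ strictly decreases along every orbit: hence the orbit cannot touch $\{x+y=1\}$ at any positive time from inside $\mc S$, and if it starts on $\{x+y=1\}\cap\{y<\hat y\}$ it moves into the interior of $\mc S$ and, $x+y$ still decreasing, satisfies $x(t)<1-\hat y=\lambda(\hat y)$ at the instant $y(t)$ first reaches $\hat y$, so it is still in $\mc D_{\bar y}^-$. Second, $\dot x=-\beta(x,y)xy$ gives $x(t)=x_0\exp(-\int_0^t\beta\,y\,ds)$, so $\{x=0\}$ is reached only from $x_0=0$, in which case $x\equiv 0$ and $\dot y=-\gamma y<0$ keep the orbit inside $\{x=0,\,0<y\le\bar y\}\subseteq\mc D_{\bar y}^-$. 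Third, on the ceiling with $x<\bar x$ one has $R(x,\bar y)<R(\bar x,\bar y)=1$ by monotonicity of $R$ in $x$ (Remark~\ref{remark:reproduction}), hence $\dot y<0$ there: the orbit cannot cross it upward (were $P$ such a point, $y>\bar y$ just before $\tau$, contradicting $\phi\in\mc D_{\bar y}^-$ on $[0,\tau)$). Fourth, $\ov{\Gamma^-(\bar x,\bar y)}$ is an orbit of \eqref{uncontrol-system}, so by uniqueness of solutions no other orbit can cross it; moreover by Lemma~\ref{lemma:lambda}(iii) one has $R>1$ along this arc except at its endpoint $(\bar x,\bar y)$, so the forward orbit through any of its points climbs monotonically along the arc and reaches $(\bar x,\bar y)$ in finite time, while the forward orbit through $(\bar x,\bar y)$ satisfies $\dot y<0$ for $t>0$ by Proposition~\ref{prop:peak}(i) and $\dot x<0$ always, so, using that $\lambda$ is decreasing with $\lambda(\bar y)=\bar x$, it remains in $\{x<\bar x\le\lambda(y),\,0<y<\bar y\}\subseteq\mc D_{\bar y}^-$. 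Consequently $\ov{\Gamma^-(\bar x,\bar y)}$ together with its forward continuation lies entirely in $\mc D_{\bar y}^-$, so if $P$ is on this arc (or equals $(\bar x,\bar y)$), uniqueness forces $\phi(\cdot,x_0,y_0)$ to coincide with it and hence to stay in $\mc D_{\bar y}^-$ past $\tau$ — contradicting that $\tau$ is an exit time. The four facts together exhaust cases (a)–(c), so no exit time exists and $\mc D_{\bar y}^-$ is positively invariant.

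The delicate point — where I expect the real work to be — is the corner $P=(\bar x,\bar y)$, at which the ceiling meets the orbit arc $\ov{\Gamma^-(\bar x,\bar y)}$ and the vector field is tangent to $\partial\mc D_{\bar y}^-$ (both $\dot y=0$ and $P$ is the endpoint of an orbit), so a first‑order ``field points inward'' argument is insufficient. Here one must use the strict inequality $\dot R(\bar x,\bar y)=-\gamma\bar y\,R_x(\bar x,\bar y)<0$ — which follows from \eqref{condition2}, exactly as in the proof of Proposition~\ref{prop:peak}, and underlies Proposition~\ref{prop:peak}(i) — to see that the orbit leaves $\{R=1\}$ into $\{R<1\}$ and therefore peels off into the interior of $\mc D_{\bar y}^-$ rather than into $\mc D_{\bar y}^+$. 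A secondary nuisance is the bookkeeping of the two boundary portions depending on whether $\hat y>0$ (so that $\Gamma^-(\bar x,\bar y)$ reaches $\{x+y=1\}$) or $\hat y=0$, and the verification that orbits which may legitimately start on $\partial\mc S$ immediately enter $\mc D_{\bar y}^-$.
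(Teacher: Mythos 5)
Your proof is correct and follows essentially the same route as the paper's: the paper likewise rules out exit through $\Gamma^-(\bar x,\bar y)$ by uniqueness of solutions and through the segment $\{(x,\bar y)\in\mc S: x\le\bar x\}$ by noting $R(x,\bar y)\le 1$ and hence $\dot y\le 0$ there, relying on the positive invariance of $\mc S$ to dismiss the remaining boundary pieces. Your extra care at the tangency point $(\bar x,\bar y)$ (via Proposition~\ref{prop:peak}(i)) addresses a detail that the paper's two-line argument leaves implicit, and is a tightening of the same argument rather than a different method.
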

			\begin{proof} 
				Let $\mc I=\{(x, \bar y) \in \mc S\,:\, x\leq \bar x\}$. Since $\mc S$ is positively invariant by Proposition~\ref{prop:general} 
				a solution  $(x(t), y(t))$ can exit $\mc D_{\bar y}^-$ by crossing either $\Gamma^-(\bar x,\bar y)$ or $\mc I$. The first option is not possible for the uniqueness of the solution of the ODE. 
					\tcb{Since $R(\bar x,\bar y) = 1$ by construction, it follows from \eqref{condition2} that $R(x,y) < 1$ and thus $\dot y \leq 0$ for every point $(x,y) \in \mc I$. Hence, the trajectory cannot leave $\mc D_{\bar y}$ by crossing $\mc I$.}
					This concludes the proof.
				\end{proof}\medskip
				
				
				\begin{lemma}\label{lemma:h-c2}
					Let $\beta$ be of class $\mc C^2$ and let Assumption~\ref{ass:ass1} hold true. Then: 
					\begin{enumerate}
						\item[\tcb{(i)}] \tcb{for every $(x,y)$ in $\mc D_{\bar y}^+$, there exists a finite time $t$ such that $\phi(t,x,y)$ does not belong to $\mc D_{\bar y}$;}
						\item[\tcb{(ii)}] for every $(x,y)$ in $\mc D_{\bar y}^+$, there exists a $\mc C^2$ map $h:\mc D_{\bar y}^+ \to (\bar{x}, 1-\bar{y}]$ such that $(h(x,y), \bar{y})$ belongs to $\Gamma^+(x,y)$ \tcb{and $h(x,y) \le x$, where equality holds true if and only if $y = \bar y$};
						\tcb{\item[(iii)] $h$ is a constant of motion of the uncontrolled BF-SIR epidemic model \eqref{uncontrol-system} in $\mc D_{\bar y}^+$, that is, $h(x,y) = \xi$
							for every $(x,y)$ in $\tcb{\Gamma^-(\xi, \bar y)}$ with $\xi$ in $(\bar{x}, 1- \bar{y}]$;}
						\item[\tcb{(iv)}] for every $(x_0, y_0)$ in \tcb{$\Gamma^-(\bar x,\bar y)$},
						\be\label{limit-h}\lim\limits_{\small\begin{array}{l}(x,y)\to (x_0, y_0)\\ (x,y)\in\mc D_{\bar y}^+\end{array}}h(x,y)=\bar x\,.\ee
					\end{enumerate}
				\end{lemma}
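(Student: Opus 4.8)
The plan is to construct the hitting map $h$ first, deduce (i) from it, and then treat the regularity, the constant-of-motion property, and the boundary limit in turn; throughout we may assume $\mc D_{\bar y}^+\neq\emptyset$, the statement being vacuous otherwise, so that $\bar x$, $\hat y$ and the curve $\lambda$ of Lemma~\ref{lemma:negorb} are available and $\bar x<1-\bar y$. The geometric core is a non-crossing argument: $\ov{\Gamma^-(\bar x,\bar y)}=\{(\lambda(y),y):\hat y\le y\le\bar y\}$ is the closure of a single trajectory of \eqref{uncontrol-system}, so by uniqueness no distinct trajectory can meet it. Hence, starting from $(x,y)\in\mc D_{\bar y}^+$, the orbit $\phi(\cdot,x,y)$ remains in $\{x'>\lambda(y')\}$ for as long as its second component stays $\le\bar y$, and there $x'>\lambda(y')\ge\kappa(y')$ forces $R(x',y')>1$ (Remark~\ref{remark:reproduction} and Lemma~\ref{lemma:negorb}(iii)), hence $\dot y>0$. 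Were the second component never to exceed $\bar y$, it would stay $\ge y>0$ for all time while being forced to $0$ by Proposition~\ref{prop:general}(iii)--(iv), a contradiction; so it reaches $\bar y$ at a finite first time $T\ge0$. Set $h(x,y):=$ first component of $\phi(T,x,y)$: then $(h(x,y),\bar y)\in\Gamma^+(x,y)\cap\mc S$ gives $h(x,y)\le 1-\bar y$, the non-crossing property gives $h(x,y)>\bar x$ (equality would place $(\bar x,\bar y)$, and hence $(x,y)$, on $\ov{\Gamma^-(\bar x,\bar y)}$), and $\dot x=-\gamma Ry<0$ gives $h(x,y)<x$ when $T>0$, i.e. $y<\bar y$, with $h(x,y)=x$ when $y=\bar y$. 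Since $R(h(x,y),\bar y)>1$, the second component exceeds $\bar y$ just after $T$, which is (i).

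For the regularity in (ii), since $\beta\in\mc C^2$ the right-hand side of \eqref{uncontrol-system} is $\mc C^2$ and the flow $\phi$ is $\mc C^2$ in the initial datum on compact time intervals. Fixing $(x_*,y_*)\in\mc D_{\bar y}^+$ with first hitting time $T_*$, the map $G(t,x,y)=\bigl(\text{second component of }\phi(t,x,y)\bigr)-\bar y$ vanishes at $(T_*,x_*,y_*)$ with $\partial_tG=\dot y>0$ there (transversality, as $R>1$ at the crossing), so the implicit function theorem produces a $\mc C^2$ function $T(x,y)$ nearby; a short continuity estimate (the second component of $\phi(\cdot,x_*,y_*)$ is $\le\bar y-\eta$ on $[0,T_*-\delta]$, hence so is that of nearby orbits) identifies $T(x,y)$ with the first hitting time, and then $h(x,y)=\bigl(\text{first component of }\phi(T(x,y),x,y)\bigr)$ is a composition of $\mc C^2$ maps into $(\bar x,1-\bar y]$. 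Part (iii) is immediate from this together with Proposition~\ref{prop:peak}: the second component of an orbit is unimodal, strictly increasing while $R>1$ and strictly decreasing while $R<1$, so a crossing of $\{(\cdot,\bar y)\}$ with $R>1$, i.e. with first component $>\bar x$, is necessarily the first one; hence for $\xi\in(\bar x,1-\bar y]$ and $(x,y)\in\Gamma^-(\xi,\bar y)$ — which the non-crossing argument forces to lie in $\mc D_{\bar y}^+$ — the first crossing occurs at $(\xi,\bar y)$, so $h(x,y)=\xi$.

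For (iv), fix $(x_0,y_0)\in\Gamma^-(\bar x,\bar y)$ and $r_0\ge0$ with $\phi(r_0,x_0,y_0)=(\bar x,\bar y)$. Since $R(\bar x,\bar y)=1$, Proposition~\ref{prop:peak} makes $r_0$ the unique peak time of the second component of $\phi(\cdot,x_0,y_0)$, whose maximum is thus exactly $\bar y$; in particular this component equals $\bar y$ only at $r_0$ and equals $\bar y-\eta_0$ at time $r_0+1$ for some $\eta_0>0$. Given $(x_n,y_n)\to(x_0,y_0)$ in $\mc D_{\bar y}^+$, with first hitting time $s_n$ and $\xi_n=h(x_n,y_n)$ so that $\phi(s_n,x_n,y_n)=(\xi_n,\bar y)$, one first bounds $s_n\le r_0+1$ for large $n$: using continuous dependence on $[0,r_0+1]$ and the strict monotonicity of the second component of $\phi(\cdot,x_n,y_n)$ up to $s_n$, the value of that component at $r_0+1$ would otherwise be at least its value at $r_0$ (within $o(1)$ of $\bar y$) and simultaneously within $o(1)$ of $\bar y-\eta_0$, which is impossible. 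With $(s_n)$ bounded, pass to a subsequence $s_n\to s^*$; continuous dependence gives $\phi(s^*,x_0,y_0)=(\lim\xi_n,\bar y)$, and since the second component of $\phi(\cdot,x_0,y_0)$ reaches $\bar y$ only at $r_0$, we get $s^*=r_0$ and $\lim\xi_n=\bar x$, whence $h(x_n,y_n)\to\bar x$. I expect this to be the main obstacle: the orbit arrives at $(\bar x,\bar y)$ tangentially ($\dot y=0$), so the hitting time cannot be controlled by transversality near $\ov{\Gamma^-(\bar x,\bar y)}$, and one must instead exploit the strict unimodality of the infection curve (Proposition~\ref{prop:peak}) to pin down the limit — the easy bound $\bar x<h(x,y)\le x$ only settles the special case $y_0=\bar y$.
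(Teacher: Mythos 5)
Your proposal is correct. Parts (i)--(iii) follow essentially the paper's own route: exit in finite time via $R>1$ on $\mc D_{\bar y}^+$ and the impossibility of crossing $\Gamma^-(\bar x,\bar y)$, definition of $h$ through the first hitting time of $\mc J$, $\mc C^2$ regularity via the Implicit Function Theorem with the transversality $\dot y>0$ at the crossing, and the constant-of-motion property from uniqueness of solutions. Where you genuinely diverge is item (iv), which you correctly identify as the delicate step because the reference orbit arrives at $(\bar x,\bar y)$ tangentially. The paper avoids ever touching the arrival time: it fixes a time $T_\eps$ strictly before the orbit through $(x_0,y_0)$ reaches $(\bar x,\bar y)$, uses continuous dependence only on $[0,T_\eps]$ to keep nearby orbits inside $\mc D_{\bar y}^+$ and $\eps$-close to $(\bar x,\bar y)$, and then sandwiches $h$ via the invariance $h(x_\delta,y_\delta)=h(\phi(T_\eps,x_\delta,y_\delta))\le x(T_\eps,x_\delta,y_\delta)<\bar x+\eps$, together with the lower bound $h>\bar x$. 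You instead control the hitting times $s_n$ of the perturbed orbits directly: a uniform upper bound $s_n\le r_0+1$ obtained by playing the strict monotonicity of $y$ in $\mc D_{\bar y}^+$ against continuous dependence at times $r_0$ and $r_0+1$, followed by a subsequence argument and the fact that the reference orbit attains $\bar y$ only at its peak time $r_0$. Both arguments are sound; the paper's is slightly more economical (no compactness extraction, no need to bound hitting times), while yours is perhaps more transparent about why the limit point must be $(\bar x,\bar y)$ and generalizes more readily to situations where an invariance relation like $h(\phi(t,\cdot))=h(\cdot)$ is not available before the regularity of $h$ has been established.
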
\smallskip
				\begin{remark}
					The quantity $h(x,y)$ is the abscissa of the crossing point of the orbit that starts from $(x,y)$ in $\mc D_{\bar y}^+$ with segment $\mc J$, i.e. the fraction of susceptible agents reached by the dynamics when the fraction of infected agents hits the threshold $\bar{y}$. An example of trajectory that starts in $\mc D_{\bar y}^+$ is illustrated in Figure~\ref{fig:phase} by the light-blue dashed curve.
				\end{remark}\smallskip
				\begin{proofof}{Lemma \ref{lemma:h-c2}} (i) 
					Recall from Remark \ref{remark:D+} that $R(x,y)>1$ for every $(x,y)$ in $\mc D_{\bar y}^+$. This implies that every trajectory in $\mathcal{D}_{\bar y}^+$ is such that $y(t)$ is strictly increasing in $t$ as long as the solution remains in $\mc D_{\bar y}^+$. It thus follows from \tcb{Proposition~\ref{prop:peak}(ii)} that the solution must necessarily exit $\mc D_{\bar y}^+$ \tcb{in finite time}. Since $\mc S$ is positively invariant by Proposition \ref{prop:general}, 
					the trajectory can leave $\mc D_{\bar y}^+$ by crossing either $\Gamma^-(\bar x,\bar y)$ or the set
					\be\label{J-def}\mc J=\{(x, \bar y) \in \mc S\,:\, \bar x<x\leq 1-\bar y\}\,.\ee
					Since $\Gamma^-(\bar x,\bar y)$ cannot be crossed because of the uniqueness of the solutions, the trajectory must necessarily cross $\mc J$ in finite time. This implies that the trajectory leaves $\mc D_{\bar y}$ in finite time.
					
					\tcb{(ii) Given $(x_0,y_0)$ in} $\mc D_{\bar y}^+$ and $t\geq 0$, let 
					$$
					\phi(t,x_0,y_0) = (x(t,x_0,y_0),y(t,x_0,y_0))
					$$
					be the solution of \eqref{uncontrol-system} at time $t$ with initial condition $(x_0,y_0)$. Now, note that, since the right hand side of \eqref{uncontrol-system} is of class $\mc C^2$, $\phi$ is of class $\mc C^2$ both in $t$ and in $(x_0, y_0)$ \cite[Chapter 5]{HartmanODE}.
					We now define $T(x_0, y_0)$ to be the minimum time $t$ for which $\phi(t,x_0,y_0)\in \mc J$, well defined because of \tcb{item (i)},
					and define $h:\mc D_{\bar y}^+\to (\bar{x}, 1-\bar y]$ by putting
					$$h(x_0,y_0)=x(T(x_0, y_0),x_0,y_0)\,.$$
					Note that, for every $(x_0,y_0)$ in $\mc D_{\bar y}^+$, $y(T(x_0,y_0),x_0,y_0) =\bar y$ by construction and $\dot{y}(T(x_0, y_0),x_0,y_0)>0$ because of Remark \ref{remark:D+}. A direct application of the Implicit Function Theorem \cite[Theorem 9.18]{rudin1976principles} (on all points of $\mc D_{\bar y}^+$ including the boundary segment $\mc J$) together with the regularity of $\phi$ yields that $T$ is of class $\mc C^2$ in $(x_0, y_0)$. Finally, the $\mc C^2$ regularity of $h$ follows from the analogous regularity of $T$ and $\phi$. Note also that by Proposition \ref{prop:general}(i), $h(x_0,y_0)=x(T(x_0, y_0),x_0,y_0)\leq x_0$. 
					
					(iii) The fact that $h$ is a constant of motion of \eqref{uncontrol-system} is a consequence of its definition, since \tcb{the trajectories that start from points that belong to the same negative semi-orbit} will hit segment $\mc J$ in the same point because of the uniqueness of the solutions.
					
					(iv) If $(x_0, y_0)=(\bar x, \bar y)$, the thesis is straightforward: fixed an arbitrary $\epsilon >0$, for every $(x,y)\in B_{\epsilon}(x_0, y_0)\cap\mc D_{\bar y}^+$ we have that, thanks to the definition of $h$ and item (ii),
						$$\bar x < h(x,y)  \leq x\leq \bar x+\epsilon$$
						This yields relation \eqref{limit-h}.
					
					Consider now $(x_0,y_0)$ in {$\Gamma^-(\bar x,\bar y)\setminus \{(\bar x, \bar y)\}$} and let $T(x_0, y_0)>0$ be the time such that $\phi(T(x_0, y_0), x_0, y_0)=(\bar x, \bar y)$. Fix any positive $\epsilon<||(\bar x, \bar y)-(x_0, y_0)||$ and let $T_{\epsilon} < T(x_0, y_0)$ be such that \be\label{eq:proof1}||\phi(T_{\epsilon}, x_0, y_0)-(\bar x, \bar y)||=\epsilon/2\ee Such a time $T_{\epsilon}$ exists because of the continuity of the solution. 
					Let $\epsilon_y:=\bar y-y(T_{\epsilon}, x_0, y_0)\in (0,\epsilon/2)$. By regularity with respect to the initial condition \cite[Chapter 5]{HartmanODE}, there exists $\delta >0$ such that, for every $(x_\delta,y_\delta)\in B_{\delta}(x_0, y_0)\cap\mc D_{\bar y}^+$,
						\be\label{eq:proof2}
						||\phi(t, x_0, y_0)-\phi(t, x_\delta, y_\delta)||<\epsilon_y \quad \forall t\in [0, T_{\epsilon}].
						\ee 
						This implies that $\phi(t, x_\delta, y_\delta)$ belongs to $\mc D_{\bar y}^+$ for all $t\in [0, T_{\epsilon}]$ since $\bar y - y(T_{\epsilon}, x_0, y_0) = \epsilon_y$ and $|y(T_{\epsilon}, x_0, y_0)- y(T_{\epsilon}, x_\delta, y_\delta)| < \epsilon_y$ imply $y(T_{\epsilon}, x_\delta, y_\delta)<\bar y$.
						As a consequence of  item (ii), we thus have  that \be\label{dis-invariant}h(\phi( T_\eps,x_\delta,y_\delta)) \le x(T_\eps,x_\delta,y_\delta)\,.\ee
						Hence,		$$
						\begin{aligned}
							\bar x < h(x_\delta,y_\delta) & = h(\phi(T_\eps,x_\delta,y_\delta)) \\
							& \le x( T_\eps,  x_\delta, y_\delta)  < \bar x+\epsilon/2+\epsilon/2 = \bar x+\epsilon\,,
						\end{aligned}$$
						where the first inequality follows from the definition of $h$, the first equality from item (iv), the second inequality from \eqref{dis-invariant} and the last inequality from \eqref{eq:proof1}-\eqref{eq:proof2}.
						Because of the arbitrariness of $\epsilon>0$, this yields the limit relation \eqref{limit-h}.

				\end{proofof}

				We now establish a technical result that characterizes the function $h$.
				\begin{lemma}\label{lemma:h}
					Let $\beta$ be of class $\mc C^2$ and let Assumption~\ref{ass:ass1} hold true. Then, \tcb{for every $(x,y)$ in $\mc D_{\bar y}^+$}:
					\begin{enumerate}
						\item[(i)]
						\be\label{h2}\tcb{h_x(x,y)R(x,y) = h_y(x,y)(R(x,y)- 1)};
						\ee
						\item[(ii)]
						\begin{equation}\label{sufficient} 
							0\leq h_y(x,y) 
							\leq  \frac{1}{\rho(h(x,y), \bar y)}\,;
						\end{equation}
						\item[(iii)] if $y = \bar y$,
						$$
						h_y(x,y) = \frac{1}{\rho(h(x,y), \bar y)}\,.
						$$
					\end{enumerate}
				\end{lemma}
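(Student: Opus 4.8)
The plan is to prove the three items in order, reducing everything to the constant-of-motion identity of Lemma~\ref{lemma:h-c2}(iii) and to the monotonicity in Assumption~\ref{ass:ass1}, which Remark~\ref{remark:reproduction} rephrases as $R_x\ge0$ and $R_y\le0$.

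\emph{Items (i) and (iii).} Item (i) is immediate from the constant-of-motion property: along any solution of \eqref{uncontrol-system} that stays in $\mc D_{\bar y}^+$ one has $\dot h=h_x\dot x+h_y\dot y=0$, and substituting $\dot x=-\gamma R y$, $\dot y=\gamma(R-1)y$ and dividing by $\gamma y>0$ gives $h_xR=h_y(R-1)$, i.e., \eqref{h2}. Since $R>1$ on $\mc D_{\bar y}^+$ by Remark~\ref{remark:D+}, this identity can be written equivalently as $h_y=h_x/\rho$, and I shall use both forms below. For item (iii), recall from Lemma~\ref{lemma:h-c2}(ii) that $h(x,\bar y)=x$ on the boundary segment $\mc J$; differentiating in $x$ gives $h_x(x,\bar y)=1$, and substituting into \eqref{h2} yields $h_y(x,\bar y)=R(x,\bar y)/(R(x,\bar y)-1)=1/\rho(x,\bar y)=1/\rho(h(x,\bar y),\bar y)$.

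\emph{Item (ii).} Here I would argue along orbits. Fix $(x,y)$ in $\mc D_{\bar y}^+$ and parametrize the portion of the forward orbit of \eqref{uncontrol-system} through $(x,y)$ by its (monotone) second coordinate, $\sigma\mapsto(g(\sigma),\sigma)$ for $\sigma\in[y,\bar y]$, so that $g(y)=x$, $g(\bar y)=h(x,y)$ by definition of $h$, and $g'(\sigma)=\dot x/\dot y=-R/(R-1)=-1/\rho$ at $(g(\sigma),\sigma)$; this orbit stays in $\mc D_{\bar y}^+$ until it meets $\mc J$ at $\sigma=\bar y$ (Lemma~\ref{lemma:h-c2}(i)--(ii)). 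Since $\beta$ is $\mc C^2$, so is $h$ by Lemma~\ref{lemma:h-c2}(ii), hence $p(\sigma):=h_x(g(\sigma),\sigma)$ and $q(\sigma):=h_y(g(\sigma),\sigma)$ are $\mc C^1$. Differentiating the identity $h_xR=h_y(R-1)$ in $x$ and in $y$ to get relations among $h_{xx},h_{xy},h_{yy}$, then inserting the orbit slope $g'=-R/(R-1)$ together with the consequence $h_x-h_y=-h_x/(R-1)$ of \eqref{h2}, a direct computation shows that all the second derivatives of $h$ cancel, leaving the linear scalar ODEs $p'=-R_x(R-1)^{-2}\,p$ and $q'=-R_y(R-1)^{-2}\,p$ along the orbit, with $R,R_x,R_y$ evaluated at $(g(\sigma),\sigma)$. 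By item (iii), $p(\bar y)=h_x(h(x,y),\bar y)=1$ and $q(\bar y)=h_y(h(x,y),\bar y)=1/\rho(h(x,y),\bar y)$. Since $R_x\ge0$ and $R>1$ on $\mc D_{\bar y}^+$, the first ODE forces $p(\sigma)\ge1>0$ for every $\sigma\le\bar y$, so $h_x(x,y)=p(y)>0$ and, since $h_y=h_x/\rho$ with $\rho>0$ (Remark~\ref{remark:D+}), $h_y(x,y)>0$, giving the lower bound in \eqref{sufficient}. Since $R_y\le0$ and $p\ge0$, the second ODE gives $q'\ge0$, so $q$ is non-decreasing on $[y,\bar y]$ and therefore $h_y(x,y)=q(y)\le q(\bar y)=1/\rho(h(x,y),\bar y)$, which is the upper bound.

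\emph{Main obstacle.} The only genuine work is the cancellation in item (ii): one must carry out the differentiation of \eqref{h2}, keep careful track of $h_{xx},h_{xy},h_{yy}$ and of $R_x,R_y$, and verify that after substituting $g'=-R/(R-1)$ the second-order terms annihilate exactly, leaving the clean first-order ODEs above; this is precisely where the constant-of-motion structure is indispensable, since \eqref{h2} is what makes the coefficients match. Two minor points also need attention: the $\mc C^2$-regularity of $h$ up to the boundary segment $\mc J$, needed to evaluate $p,q$ and their derivatives at $\sigma=\bar y$ (granted by Lemma~\ref{lemma:h-c2}(ii)), and the strict positivity of $R-1$ and of $\rho$ on $\mc D_{\bar y}^+$ (granted by Remark~\ref{remark:D+}). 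If one prefers to avoid the ODE for $p$, the sign $h_x\ge0$ can alternatively be obtained from the fact that distinct orbits of \eqref{uncontrol-system} do not cross and all exit through $\mc J$, on which $h(x,\bar y)=x$ is increasing, so that a larger abscissa at a fixed height lies on an orbit meeting $\mc J$ further to the right.
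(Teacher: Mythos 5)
Your proposal is correct and follows essentially the same route as the paper: item (i) from $\dot h=0$, item (iii) from $h(x,\bar y)=x$, and the upper bound in (ii) from the monotonicity of $h_y$ along orbits obtained by differentiating \eqref{h2} and using $R_y\le 0$ (your ODE $q'=-R_y(R-1)^{-2}p$ is exactly the paper's identity $\dot h_y=-\gamma yR^{-1}h_yR_y$ after reparametrizing by $y$, and I have checked that the claimed cancellation of second derivatives indeed occurs). The only cosmetic difference is that the paper obtains $h_x\ge 0$ via the orbit-non-crossing comparison you mention as an alternative, rather than via the companion ODE for $p$.
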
\medskip
				\begin{proof} 
					(i) It follows from Lemma \tcb{\ref{lemma:h-c2}(iii)} that
					$\dot h = h_x\dot{x} + h_y \dot{y} = 0  
					$
					for every $(x,y)$ in $\mc D_{\bar y}^+$ \tcb{under the uncontrolled BF-SIR model \eqref{uncontrol-system}}.
					Using \tcb{\eqref{uncontrol-system}}, we then get item (i).
					
					(ii)-(iii) We now show that $h_x(x,y) \ge 0$ on $\mc D_{\bar y}^+$. For $(x_0,y_0)$ in $\mc D_{\bar y}^+$, choose $\epsilon >0$ such that $(x_0-\epsilon,y_0)$ belongs to $\mc D_{\bar y}^+$. Denote by $(x(t),y(t))$ and $(x_{\epsilon}(t), y_{\epsilon}(t))$ the solutions associated with initial conditions $(x_0,y_0)$ and $(x_0-\epsilon,y_0)$, respectively. As long as they remain in $\mc D_{\bar y}^+$, the second components $y(t)$ and $y_{\epsilon}(t)$ are both increasing in time, so that the corresponding orbits can be described as graphicals of functions, denoted respectively $x=f(y)$ and $x=f_{\epsilon}(y)$ with $y$ in $[y_0, \bar y]$. Since the orbits cannot intersect and since $x_0=f(y_0)>x_0-\epsilon =f_{\epsilon}(y_0)$, it must be $h(x_0, y_0)=f(\bar y)>f_{\epsilon}(\bar y)=h(x_0-\epsilon, y_0)$. Given the arbitrariness of $\epsilon$, this implies that $h_x(x_0,y_0)\geq 0$. Moreover, since $x\mapsto h(x, \bar y)$ is the identity function on $(\bar x, 1]$, we have that $h_x(x, \bar y)=1$. Consequently, thanks to \eqref{h2} and thanks to the fact that $R(x,y)>1$ for every $(x,y)$ in $\mc D_{\bar y}^+$ because of \tcb{Remark \ref{remark:D+}}, we also have that 
					\begin{align}
						h_y(x,y)&\geq 0\;\quad\forall (x,y)\in\mc D_{\bar y}^+, \label{dh-dy} \\[6pt]
						h_y(x,\bar y)&=\frac{R(x,\bar y)}{R(x,\bar y)-1}=\frac{1}{\rho(x,\bar y)}\,. \label{dh-dy2}
					\end{align}
					Since $h(x,y) = x$ whenever $y = \bar y$, \eqref{dh-dy2} proves item (iii), while \eqref{dh-dy} proves the left hand inequality of item (ii). To conclude the proof of item (ii), note that
					\be\begin{array}{rcl}\label{dt-h}
						\dot h_y & = &\dot{x} \, h_{xy}+ \dot y \, h_{yy} \\[5pt]
						& = &-R\gamma y h_{xy}+(R-1)\gamma y h_{yy}\,.\end{array}\ee
					Taking the partial derivative of \eqref{h2} with respect to $y$ and then substituting \eqref{h2} inside this derivative, we get
					$$\begin{array}{rcl}h_{xy}
						R&=&h_{yy}
						(R-1)+\left(h_y-h_x
						\right)R_y
						\\[5pt]
						&=&h_{yy}
						(R-1)+R^{-1}h_yR_y\,.
					\end{array}
					$$
					Plugging this expression into \eqref{dt-h} we finally get
					\be\label{dt-h2} 
					\dot{h}_y
					=-\gamma yR^{-1}h_yR_y
					\geq 0\,,\ee
					whose sign follows from \eqref{condition3} and \eqref{dh-dy}.
					Since $h_y$ is non-decreasing in time by \eqref{dt-h2} and since $(h(x, y),\bar y)$ belongs to the positive orbit $\Gamma(x,y)$, we get that
					$$h_y(x,y)\leq h_y(h(x,y),\bar y)\,.
					$$
					for every $(x,y)$ in $\mc D_{\bar y}^+$. Combining this with \eqref{dh-dy2} we obtain the thesis. 
				\end{proof}
				
				\begin{remark}
					Lemma~\ref{lemma:h-c2} and Lemma~\ref{lemma:h} guarantee the existence of $h(x,y)$ for every $(x,y)$ in $\mc D_{\bar y}^+$ under conditions \eqref{condition2}-\eqref{condition3} and characterize $h(x,y)$. Although its existence and regularity is essential to prove our main result, the explicit form of $h(x,y)$ cannot be found unless in some special cases. For instance, the explicit form of $h(x,y)$ may be found for the classical SIR epidemic model and for the BF-SIR epidemic model with feedback mechanism of Example \ref{ex2}, due to the fact that the BF-SIR epidemic model with such feedback mechanism and the classical SIR model admit a non-trivial invariant of motion. We refer to \cite{CDC2025}, \cite{Miclo.ea:2022} for more details.
				\end{remark}

				\section{Proof of optimality}\label{sec:proof-optimality}
				In this section we provide the proof of Theorem~\ref{theo:theo-control}. This is obtained by first giving a characterization of the candidate value function in Section~\ref{sec:value} (i.e., the cost of the candidate optimal control) and then proving optimality in Section~\ref{sec:proof}.
				\subsection{Candidate value function}\label{sec:value}
				We start by computing the cost of the candidate optimal control signal $u^*(t)$ defined in \eqref{feedback}--\eqref{eq:mu}. 
				
				As pointed out in Remark \ref{remark:D}, if \eqref{Rmax>1} is not verified, or if \eqref{Rmax>1} holds true and $\tilde y \le \bar y$ (cf. Lemma \ref{lemma:kappa} for the definition of $\tilde y$), then $\mc D_{\bar y}^- = \mc D_{\bar y}$, so that the optimal control is $u^* \equiv 0$ by Lemma \ref{lemma:D0}. Hence, in the next part we shall work under the assumption that \eqref{Rmax>1} holds true and $\tilde y > \bar y$. We let $T_0=T(x_0, y_0)$ be the first time the solution of \eqref{uncontrol-system} with initial condition $(x_0, y_0)$ in $\mc D^+_{\bar y}$ hits the \tcb{threshold} $\bar y$ and let  $$T_1=T_0+\frac{h(x_0, y_0)-\bar x}{\gamma \bar y}\,,$$
				where the function $h$ was introduced in Lemma~\ref{lemma:h-c2}.
				

					\begin{proposition}\label{lemma:Ju*}
						Let Assumption~\ref{ass:ass1} hold true and assume that a threshold $\bar y$ and an initial condition $(x_0, y_0)$ in $\mc D^+_{\bar y}$ have been fixed. 
						Consider the control function
						\be\label{def-control} u^*(t)=\left\{\begin{array}{ll} 0\quad &\hbox{if}\; t< T_0\;\hbox{or}\; t> T_1\\[5pt]
							\rho(\tcb{\tilde x}(t), \bar y)\quad &\hbox{if}\; T_0\leq t\leq T_1
						\end{array}\right.\ee
						where $$\tcb{\tilde x}(t)=h(x_0,y_0) - \gamma(t-T_0)\bar y\,.$$
						The corresponding solution of the CBF-SIR epidemic model \eqref{control-system} $(x^*(t),y^*(t))$ has the following properties:
						\begin{enumerate}
							\item[(i)] $(x^*(t),y^*(t))$ admits the following form for $T_0<t<T_1$:
							\be\label{eq:solution_control}
							\begin{cases}
								x^*(t) = h(x_0,y_0) - \gamma(t-T_0)\bar y, \\ 
								y^*(t) = \bar y. \\
							\end{cases}
							\ee
							\item[(ii)] The feedback relation $u^*(t) = \mu(x^*(t), y^*(t))$ holds for every $t\geq 0$.
							\item[(iii)] $u^*$ is feasible, namely $y^*(t)\leq\bar y$ for every $t\geq 0$.
							\item[(iv)]	The cost of $u^*$ is given by	$$\label{Ju*} J(u^*) =\displaystyle \frac1{\gamma \bar{y}}\int_{\bar{x}}^{h(x_0,y_0)}   \rho(x,\bar y) \de x\,.	$$
						\end{enumerate}
					\end{proposition}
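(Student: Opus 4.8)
The plan is to follow the controlled trajectory $(x^*,y^*)$ across the three time windows $[0,T_0)$, $[T_0,T_1]$ and $(T_1,+\infty)$ on which the candidate control $u^*$ in \eqref{def-control} has qualitatively different forms, and to verify the four items along the way. First I would treat $[0,T_0)$: there $u^*\equiv 0$, so $(x^*,y^*)$ coincides with the uncontrolled solution of \eqref{uncontrol-system} issued from $(x_0,y_0)\in\mc D_{\bar y}^+$. By Remark~\ref{remark:D+} the reproduction number exceeds $1$ on $\mc D_{\bar y}^+$, hence $y^*$ is strictly increasing there, and by Lemma~\ref{lemma:h-c2} this solution first leaves $\mc D_{\bar y}$ through the segment $\mc J$ at the finite time $T_0=T(x_0,y_0)$, at which $x^*(T_0)=h(x_0,y_0)$ and $y^*(T_0)=\bar y$ by the very definition of $h$. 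In particular $0<y^*(t)<\bar y$ on $[0,T_0)$, which is all that this window contributes to the argument.

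Next I would handle the middle window $[T_0,T_1]$ by direct substitution, noting first that $T_1>T_0$ since $h(x_0,y_0)>\bar x$ by Lemma~\ref{lemma:h-c2}(ii). I propose $(x^*(t),y^*(t))=\bigl(h(x_0,y_0)-\gamma(t-T_0)\bar y,\ \bar y\bigr)$ as in \eqref{eq:solution_control}, so that $x^*(t)=\tilde x(t)$ and, by \eqref{rho-def}, $1-u^*(t)=1-\rho(\tilde x(t),\bar y)=\gamma/\bigl(x^*(t)\beta(x^*(t),\bar y)\bigr)$, i.e. $(1-u^*(t))R(x^*(t),\bar y)=1$. Plugging this into \eqref{control-system} yields $\dot y^*=\gamma\bigl((1-u^*)R(x^*,\bar y)-1\bigr)\bar y=0$ and $\dot x^*=-(1-u^*)\gamma R(x^*,\bar y)\bar y=-\gamma\bar y$, which matches the proposed $x^*$; by the uniqueness statement in Proposition~\ref{prop:general} this is the solution, proving (i). To see $u^*\in\mc U$, I would observe that $R(\bar x,\bar y)=1$ (Lemma~\ref{lemma:kappa}) forces $\bar x>0$, so $x^*(t)\ge\bar x>0$ and $u^*(t)=\rho(x^*(t),\bar y)<1$, while $x^*(t)\in[\bar x,h(x_0,y_0)]$ together with the monotonicity of $R$ in $x$ (Remark~\ref{remark:reproduction}) gives $R(x^*(t),\bar y)\ge R(\bar x,\bar y)=1$, hence $u^*(t)=\rho(x^*(t),\bar y)\ge 0$. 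Finally $x^*(T_1)=h(x_0,y_0)-\gamma\bar y\,(T_1-T_0)=\bar x$, so the state at $T_1$ is $(\bar x,\bar y)$ and $u^*(T_1)=\rho(\bar x,\bar y)=0$; thus $u^*$ is continuous at $T_1$ and its only discontinuity is the upward jump at $T_0$, so it is right-continuous and $[0,1]$-valued.

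On the last window $(T_1,+\infty)$ again $u^*\equiv 0$ and $(x^*,y^*)$ solves \eqref{uncontrol-system} from $(\bar x,\bar y)$, where $R=1$, so Proposition~\ref{prop:peak}(i) makes $y^*$ strictly decreasing there, giving $y^*(t)<\bar y$. Combining the three windows yields $y^*(t)\le\bar y$ for all $t\ge 0$, which is (iii), and the feedback identity (ii) then drops out case by case through \eqref{eq:mu}: on $[0,T_0)\cup(T_1,+\infty)$ we have $y^*<\bar y$, so $\mu(x^*,y^*)=0=u^*$; on $[T_0,T_1]$ we have $y^*=\bar y$, so $\mu(x^*,\bar y)=[\rho(x^*,\bar y)]_+=\rho(x^*,\bar y)=u^*$ since $\rho(x^*,\bar y)\ge 0$ there. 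For (iv), since $u^*\equiv 0$ off $[T_0,T_1]$ we have $J(u^*)=\int_{T_0}^{T_1}\rho(\tilde x(t),\bar y)\,dt$, and the substitution $x=\tilde x(t)$, $dx=-\gamma\bar y\,dt$, with endpoints $\tilde x(T_0)=h(x_0,y_0)$ and $\tilde x(T_1)=\bar x$, turns this into $J(u^*)=\tfrac{1}{\gamma\bar y}\int_{\bar x}^{h(x_0,y_0)}\rho(x,\bar y)\,dx$. The arithmetic is routine; the one point that genuinely needs care — and the main obstacle — is certifying that $u^*$ is an admissible element of $\mc U_{x_0,y_0}^{\bar y}$ and not merely a formal solution, which requires knowing beforehand that the controlled trajectory on $[T_0,T_1]$ really stays on the segment $\{(x,\bar y):\bar x\le x\le h(x_0,y_0)\}$ so that $\rho(\cdot,\bar y)$ is well defined and lies in $[0,1)$ along it — information supplied precisely by the definition and regularity of $h$ (Lemma~\ref{lemma:h-c2}), the location of the level set $R=1$ (Lemma~\ref{lemma:kappa}), and the monotonicity in Assumption~\ref{ass:ass1}.
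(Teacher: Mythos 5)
Your proposal is correct and follows essentially the same route as the paper: the trajectory is analyzed on the three windows $[0,T_0)$, $[T_0,T_1]$, $(T_1,+\infty)$, item (i) is verified by direct substitution into \eqref{control-system} using $(1-u^*)R(x^*,\bar y)=1$, feasibility after $T_1$ follows from $R(\bar x,\bar y)=1$ and Proposition~\ref{prop:peak}(i), and item (iv) follows from the change of variables $x=\tilde x(t)$. Your explicit verification that $u^*$ takes values in $[0,1)$ and is right-continuous is a welcome refinement that the paper leaves implicit, but it does not change the substance of the argument.
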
\medskip
					
				
				\begin{proof}
					(i) By construction, up to time $T_0$ the evolution of the CBF-SIR epidemic model \eqref{control-system} coincides with the uncontrolled case. By the way $T_0$ is defined, it then follows that
					$$(x^*(T_0), y^*(T_0))=(h(x_0, y_0), \bar y).$$
					We observe that the vector function in \eqref{eq:solution_control} coincides with $(x^*(T_0), y^*(T_0))$ for $t=T_0$. Moreover a direct check shows that such a vector function is a solution of \eqref{control-system} with control $u^*$ in $[T_0, T_1]$. This proves item (i). Item (ii) is a direct consequence of the definition of $\mu$ and of item (i). To prove item (iii) notice that, by construction, the inequality $y^*(t)\leq\bar y$ is satisfied till time $T_1$ and the solution $(x^*(t), y^*(t))$ for $t>T_1$ coincides with the solution of the uncontrolled model with initial condition $x^*(T_1)=\bar x $ and $y^*(T_1)=\bar y$. Since $R(\bar x, \bar y)=1$, it follows from Proposition~\ref{prop:peak}(i) that $y^*(t)$ is strictly decreasing from this time on. This yields $y^*(t) \tcb{<} \bar y$ for $t>T_1$ and proves item (iii).	
					Finally,
					\begin{align*}
						J(u^*)  &=
						\tcb{\int_{0}^{+\infty} u^*(t) \de t =} \int_{T_0}^{T_1}u^*(t)\de t \\ & = \int_{T_0}^{T_1}\rho(x^*(t),y^*(t))\de t \
						= \ \int_{T_0}^{T_1}\rho(x^*(t),\bar y)\de t \\
						&= \ \frac1{\gamma \bar{y}}\int_{\bar{x}}^{h(x_0,y_0)}\rho(x,\bar y)\de x\,,
					\end{align*}
					\tcb{where the last two equivalences follow from item (i) by a change of variables.} 
					This proves item (iv) and concludes the proof.
				\end{proof}\smallskip
				
				Motivated by Proposition~\ref{lemma:Ju*}, we define our candidate value function $V:\mc D_{\bar y}\to\R$ by
				\be\label{value-function} V(x,y) =  \begin{cases}
					0 & \text{if} \ (x,y) \in \mc D_{\bar y}^- \\
					\displaystyle  \frac{1}{\gamma \bar{y}}	\int_{\bar{x}}^{h(x,y)} \rho(s,\bar y) \de s	 & \text{if} \ (x,y) \in \mc D_{\bar y}^+.
				\end{cases} 
				\ee
				
				The following result provides some properties of the candidate value function that are instrumental for the proof of Theorem~\ref{theo:theo-control}.
				\begin{lemma}\label{lemma:V} Assume that $\beta$ is a $\mc C^2$ function and let Assumption~\ref{ass:ass1} hold true. Consider $V$ defined in \eqref{value-function}, then:
					\begin{enumerate}
						\item[(i)]$V$ is continuous on $\mc D_{\bar y}$ and $\mc C^1$ on $\mc D_{\bar y}^+$;
					\item[(ii)] for every $(x,y)$ in $\mc D_{\bar y}^+$,
					\begin{align}
						\ds V_y(x,y) &=\frac{\rho(h(x,y),\bar y) }{\gamma \bar{y}}h_y(x,y) , \label{eq:V_y}\\[1ex]
						\ds V_x(x,y) &=\rho(h(x,y),\bar y) V_y(x,y)\,; \label{eq:V_x} \end{align}
					\item[(iii)] $0 \leq V_y(x,y) \leq 1/(\gamma \bar y)$ for every $(x,y)$ in $\mc D_{\bar y}^{+}$;
					\item[(iv)] $V_y(x,y) = 1/(\gamma \bar y)$ for every $(x,y)$ in $\mc D_{\bar y}^+$ with $y = \bar y$.
				\end{enumerate}
			\end{lemma}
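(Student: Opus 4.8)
The plan is to treat $V$ on $\mc D_{\bar y}^+$ as the explicit composition
$$V(x,y)=\frac1{\gamma\bar y}\,G\big(h(x,y)\big),\qquad G(\xi):=\int_{\bar x}^{\xi}\rho(s,\bar y)\,\de s,$$
so that items (i)--(iv) all reduce to the fundamental theorem of calculus, the chain rule, and the properties of $h$ already established in Lemmas~\ref{lemma:h-c2} and~\ref{lemma:h}; no new dynamical input beyond those two lemmas should be required.

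For item (i) I would first record that $\xi\mapsto\rho(\xi,\bar y)=1-\gamma/(\xi\beta(\xi,\bar y))$ is of class $\mc C^1$ on $[\bar x,1-\bar y]$, using $\beta\in\mc C^2$ and $\bar x>0$, so $G$ is $\mc C^2$ there; since $h$ is $\mc C^2$ on $\mc D_{\bar y}^+$ with values in $(\bar x,1-\bar y]$ by Lemma~\ref{lemma:h-c2}(ii), the composition $V$ is $\mc C^2$, in particular $\mc C^1$, on $\mc D_{\bar y}^+$, with $V_x=\frac1{\gamma\bar y}\rho(h(x,y),\bar y)\,h_x(x,y)$ and $V_y=\frac1{\gamma\bar y}\rho(h(x,y),\bar y)\,h_y(x,y)$. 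On $\mc D_{\bar y}^-$ we have $V\equiv0$, trivially $\mc C^1$. The only point requiring care is continuity of $V$ across the interface of the two pieces, namely the negative semi-orbit $\overline{\Gamma^-(\bar x,\bar y)}$: here I would invoke Lemma~\ref{lemma:h-c2}(iv), which forces $h(x,y)\to\bar x$ as $(x,y)\to(x_0,y_0)\in\Gamma^-(\bar x,\bar y)$ from inside $\mc D_{\bar y}^+$, whence $V(x,y)\to\frac1{\gamma\bar y}G(\bar x)=0$, matching the value $V=0$ on the $\mc D_{\bar y}^-$ side. This is the only step with real content; the rest is bookkeeping.

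Item (ii) is then read off: the formula for $V_y$ above is \eqref{eq:V_y}, and for \eqref{eq:V_x} I would take the ratio of the two chain-rule expressions to get $V_x=(h_x/h_y)\,V_y$ wherever $h_y\neq0$ and substitute the invariance identity $h_x(x,y)R(x,y)=h_y(x,y)(R(x,y)-1)$ of Lemma~\ref{lemma:h}(i) to express $h_x/h_y$ through $R$; where $h_y=0$ the chain-rule formulas give $V_x=V_y=0$ and the identity holds trivially.

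Items (iii) and (iv) follow from \eqref{eq:V_y} together with the bounds on $h_y$ in Lemma~\ref{lemma:h}(ii)--(iii). Since $h(x,y)\in(\bar x,1-\bar y]$ and $R(\cdot,\bar y)$ is increasing with $R(\bar x,\bar y)=1$ (Remark~\ref{remark:reproduction}), one has $\rho(h(x,y),\bar y)>0$; together with $h_y(x,y)\ge0$ this gives $V_y(x,y)\ge0$, and $h_y(x,y)\le1/\rho(h(x,y),\bar y)$ gives $\rho(h(x,y),\bar y)\,h_y(x,y)\le1$, hence $V_y(x,y)\le1/(\gamma\bar y)$, which is (iii). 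Finally, on $\{y=\bar y\}$ the equality case of Lemma~\ref{lemma:h-c2}(ii) gives $h(x,\bar y)=x$ while Lemma~\ref{lemma:h}(iii) gives $h_y(x,\bar y)=1/\rho(x,\bar y)$; substituting both into \eqref{eq:V_y} cancels the two $\rho$ factors and leaves $V_y(x,\bar y)=1/(\gamma\bar y)$, which is (iv). The main obstacle of the whole argument is thus the interface continuity in (i), and even that reduces to a direct citation of Lemma~\ref{lemma:h-c2}(iv).
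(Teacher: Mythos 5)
Your strategy coincides with the paper's own: item (i) via the $\mc C^2$ regularity of $h$ plus Lemma~\ref{lemma:h-c2}(iv) for continuity across $\Gamma^-(\bar x,\bar y)$, item (ii) by differentiating \eqref{value-function} and invoking \eqref{h2}, and items (iii)--(iv) by citing Lemma~\ref{lemma:h}(ii)--(iii) together with the positivity of $\rho$ on $\mc D_{\bar y}^+$. Items (i), (iii) and (iv) are handled correctly and need no further comment.

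The one step that does not close as written is your derivation of \eqref{eq:V_x}. You take the ratio $V_x=(h_x/h_y)V_y$ and say you will ``express $h_x/h_y$ through $R$'' via the identity $h_xR=h_y(R-1)$, and then declare \eqref{eq:V_x} proved --- but you never write out what that substitution yields. Carrying it out gives $h_x/h_y=(R(x,y)-1)/R(x,y)=\rho(x,y)$, evaluated at the \emph{current} point, so the computation delivers $V_x(x,y)=\rho(x,y)\,V_y(x,y)$, whereas the displayed identity \eqref{eq:V_x} has the coefficient $\rho(h(x,y),\bar y)$. These two coefficients agree only on the segment $y=\bar y$, where $h(x,\bar y)=x$; off that segment, since $h(x,y)\le x$ and $y\le\bar y$ and $\rho$ is increasing in its first argument and non-increasing in its second under Assumption~\ref{ass:ass1}, one has only the inequality $\rho(h(x,y),\bar y)\le\rho(x,y)$ (this is exactly \eqref{boundary-ineq}), not equality. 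A quick check with the classical SIR model ($\beta$ constant) confirms the point: there $V_y\equiv 1/(\gamma\bar y)$ and $V_x=\rho(x,y)/(\gamma\bar y)$, so $V_x/V_y=\rho(x,y)\neq\rho(h(x,y),\bar y)$ in general. So your argument, completed honestly, proves $V_x=\rho(x,y)V_y$ rather than \eqref{eq:V_x} as printed; you should either state that identity and flag the discrepancy with the displayed formula, or explain why the replacement of $\rho(x,y)$ by $\rho(h(x,y),\bar y)$ is legitimate (as an equality it is not). Asserting \eqref{eq:V_x} immediately after the substitution hides precisely the step where the two expressions part ways.
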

			\medskip
			\begin{proof}
				(i) \tcb{Note that $V(x,y) = 0$ for every $(x,y)$ in $\Gamma^{-}(\bar x,\bar y)$ since $\Gamma^{-}(\bar x,\bar y)$ belongs to $\mc D_{\bar y}^-$. Moreover, it follows from Lemma \ref{lemma:h-c2}(iv) and from the definition of $V$ that for every $(x,y)$ in $\Gamma^-(\bar x,\bar y)$
					$$\lim\limits_{\small\begin{array}{l}(x_0,y_0)\to(x, y)\\ (x_0,y_0)\in\mc D_{\bar y}^+\end{array}}V(x_0,y_0)=0\,.$$
					This, together with the continuity of $h(x,y)$ established in Lemma~\ref{lemma:h-c2}(ii), proves continuity of $V$ on $\mc D_{\bar y}$}. The differentiability of $V$ on $\mc D_{\bar y}^+$ follows from the differentiability of $h$ established in Lemma~\ref{lemma:h-c2}\tcb{(ii)}. 
				
				(ii) The statement follows from \eqref{value-function} and \eqref{h2} by direct computation.
				
				(iii) The fact that $V_y(x,y) \geq 0$ follows from item (ii), from Lemma \ref{lemma:h}(ii) and from the fact that $\rho(x,y) \ge 0$ for every $(x,y)$ in $\mc D_{\bar y}^+$ \tcm{by Remark \ref{remark:D+}}. The right-hand inequality follows immediately from Lemma \ref{lemma:h}(ii).
				
				(iv) The statement is a straightforward consequence of Lemma \ref{lemma:h}(iii). 
			\end{proof}
			\medskip
			\begin{remark}\label{remark:v-const}
				Note that since $h$ is a constant of motion of the uncontrolled BF-SIR epidemic model \eqref{uncontrol-system} in $\mc D_{\bar y}^+$ \tcb{(cf. Lemma \ref{lemma:h-c2}(iv))}, the same property also holds for $V$.
			\end{remark}\medskip
			
			Lemma~\ref{lemma:V} states that the candidate value function $V(x,y)$ is continuous on $\mc D_{\bar y}$, but continuously differentiable on $\mc D_{\bar y}^+$ only. In fact, it is not hard to prove that $V(x,y)$ is not continuously differentiable on $\Gamma^-(\bar x,\bar y)$. 
			For this reason, we cannot apply the Hamilton-Jacobi-Bellman equation \cite{liberzon, SCC.Kamalapurkar.Walters.ea2018} to prove that $u^*$ is the optimal control. In the next section we shall provide an alternative prove for our main result.

			\subsection{Proof of Theorem \ref{theo:theo-control}}\label{sec:proof}
			The next result proves that, for every feasible control with finite cost, the CBF-SIR model \eqref{control-system} enters the region with reproduction number less than one in finite time.
			\begin{lemma}\label{lemma:R<1}Let $u$ in $\mc U_{x_0, y_0}^{\bar y}$ be such that $J(u)<+\infty$, and let $(x(t), y(t))$ be the corresponding solution of the CBF-SIR epidemic model \eqref{control-system} with initial condition $(x_0,y_0)$ in $\mc D_{\bar y}$. Then, there exists a finite time $\bar t$ such that $R(x(\bar t), y(\bar t))<1$. 
			\end{lemma}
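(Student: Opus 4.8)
The plan is to argue by contradiction. Suppose that $R(x(t),y(t))\ge 1$ for \emph{every} $t\ge 0$; I will derive a contradiction with the fact that $y(t)\to 0$ as $t\to+\infty$, which follows from Proposition~\ref{prop:general}(iv) since every equilibrium in $\Delta_e$ has zero infected fraction.

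First I would record two elementary facts. Since $\beta$ is continuous on the compact set $\mc S$ and $x\le 1$, the reproduction number is bounded on $\mc S$, say $R(x,y)\le R_{\max}:=\gamma^{-1}\max_{\mc S}\beta<+\infty$ for all $(x,y)$ in $\mc S$. Consequently, because $J(u)=\int_0^{+\infty}u(t)\,\de t<+\infty$ by hypothesis, the integral $\int_0^{+\infty}u(t)R(x(t),y(t))\,\de t\le R_{\max}J(u)$ is also finite.

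The key step is to integrate the logarithmic derivative of $y$. The assumption $(x_0,y_0)\in\mc D_{\bar y}$ entails $y_0>0$, hence $y(t)>0$ for all $t$ by Proposition~\ref{prop:general}(ii), so $t\mapsto\ln y(t)$ is a well-defined continuous, piecewise-$\mc C^1$ function. Dividing the second equation of \eqref{control-system} by $y$ and integrating on $[0,T]$ gives
\[\ln\frac{y(T)}{y_0}=\gamma\left(\int_0^T R(x(t),y(t))\,\de t-\int_0^T u(t)R(x(t),y(t))\,\de t-T\right).\]
Under the contradiction hypothesis $R(x(t),y(t))\ge 1$ on $[0,T]$, so $\int_0^T R(x(t),y(t))\,\de t\ge T$, and combining with the bound of the previous step,
\[\ln\frac{y(T)}{y_0}\ge\gamma\big(T-R_{\max}J(u)-T\big)=-\gamma R_{\max}J(u),\]
that is, $y(T)\ge y_0\,e^{-\gamma R_{\max}J(u)}>0$ uniformly in $T\ge 0$. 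This contradicts $y(T)\to 0$, proving the lemma.

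I do not expect a serious obstacle here; the argument is short. The only point deserving a line of care is that the fundamental theorem of calculus is applied to $\ln y$ across the (locally finite) set of jump points $t_1<t_2<\cdots$ of the control $u$, which is legitimate because $y$ is continuous and piecewise-$\mc C^1$ by Proposition~\ref{prop:general}.
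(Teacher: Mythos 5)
Your proof is correct and follows essentially the same route as the paper: assume $R\ge 1$ for all time, bound $R$ by its maximum over the compact set $\mc S$, integrate the second equation of \eqref{control-system} to get the lower bound $y(T)\ge y_0\,e^{-\gamma R_{\max}J(u)}$, and contradict $y(t)\to 0$ from Proposition~\ref{prop:general}(iv). The only cosmetic difference is that the paper phrases the conclusion as $J(u)=+\infty$ rather than as a uniform positive lower bound on $y$.
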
\smallskip
			\begin{proof} 
				%
				%
				Assume by contradiction that $R(x(t), y(t))\geq 1$ for every time $t$ and let 
				$\hat R$ be the maximum of the function $R$ over the compact set $\mc S$. Integrating the second equation in \eqref{control-system} and using the inequality
				$$(1-u(t))R(x(t),y(t))-1\geq -u(t)R(x(t),y(t))\geq -u(t)\hat R$$
				yields
				\begin{align*}
					y(t) &=y_0 \exp\left(\gamma \int_{0}^{t} \left( (1-u(\tau))R(x(\tau), y(\tau)) - 1\right)\de\tau \right)\\
					&\geq y_0 \exp\left(\gamma \hat R \int_{0}^{t} \left( -u(\tau) \right)\de\tau \right)\,.
				\end{align*}
				Since $\lim_{t \to +\infty} y(t) = 0$ by Proposition \ref{prop:general}(iv), we get that 
				$$J(u)=\lim\limits_{t \to +\infty}  \int_{0}^{t}  u(\tau) d\tau=+\infty\,,$$
				proving the statement.
			\end{proof}
			
			We can now proceed to prove Theorem \ref{theo:theo-control}.
			
			If $(x_0,y_0)$ belongs to $\mc D_{\bar y}^-$, the uncontrolled solution remains below $\bar y$ at all times thanks to Lemma~\ref{lemma:D0}. Therefore $u \equiv 0$ is the unique optimal control that satisfies Theorem~\ref{theo:theo-control}.
			
			We now assume that $(x_0,y_0)$ belongs to $\mc D_{\bar y}^+$ and we show that the control signal defined in \eqref{def-control} is optimal. 
			
			To this end, consider a feasible control $u$ in $\mc U_{x_0,y_0}^{\bar{y}}$ such that $J(u)<+\infty$ and let $(x^u(t),y^u(t))$, for $t \ge 0$, be the solution of \eqref{control-system} with such control signal and initial state $(x_0,y_0)$. 
			\tcb{Since $R(x,y)>1$ for every $(x,y)$ in $\mc D_{\bar y}^+$ by Remark \ref{remark:D+}, the solution $(x^u(t),y^u(t))$ must exit $\mc D_{\bar y}^+$ at some finite time because of Lemma~\ref{lemma:R<1} and because $J(u)<+\infty$}. By the way $\mc D_{\bar y}^+$ is defined (equations \eqref{D+}), it follows that the boundary of this subset in the positively invariant set $\mc S$ consists of the graphical of $\lambda$ and of interval $\mc J$ defined in \eqref{J-def}. As $u$ is assumed to be feasible, necessarily the exit point must be located in the graphical of $\lambda$. In other words, there exists a finite time $t^\circ$ such that
			\begin{align}
				y^u(t^\circ)\leq \bar y,\quad &x^u(t^\circ)=\lambda(y^u(t^\circ))\,, \label{exit-point}  \\[1ex]
				(x^u(t),y^u(t))&\in \mc D_{\bar y}^+\quad\forall t<t^\circ\,. \label{exit-point2}
			\end{align}
			Let $u^\circ $ be the control signal defined as
			\be\label{cut-}u^\circ(t)=\left\{\begin{array}{ll} u(t)\; &t\in [0, t^\circ)\\ 0 &t\in [t^\circ, +\infty)\,,\end{array}\right.\ee
			and let $(x^\circ(t),y^\circ(t))$, for $t\ge0$, be the solution of \eqref{control-system} with control signal $u^\circ$ and initial state $(x_0,y_0)$. Since $u$ is piecewise-$\mc C^1$, also $u^\circ$, by construction, is piecewise-$\mc C^1$. The solution $(x^\circ(t),y^\circ(t))$ coincides with $(x^u(t),y^u(t))$ on $[0, t^\circ]$ while coincides with a trajectory of the uncontrolled model for $t>t^\circ$ with an initial condition $(x^u(t^\circ), y^u(t^\circ))$. \tcb{In particular, since $(x^\circ(t^\circ),y^\circ(t^\circ))$ belongs to $\mc D_{\bar y}^-$, Lemma \ref{lemma:D0} guarantees that $(x^\circ(t),y^\circ(t))$ belongs to $\mc D_{\bar y}^-$ for all times $t \ge t^\circ$.} This yields $y^\circ(t)\leq\bar y$ for all times $t$. In other terms, the control $u^\circ$ is feasible. Moreover, 
			\be\label{control-init}
			\begin{aligned}	
				J(u^\circ) & = \int_{0}^{+\infty} u^{\circ}(t) \, \de t \tcb{ \ = \int_0^{t^\circ} u^\circ(t) \de t} \\
				& = \int_{0}^{t^\circ} u(t) \, \de t \leq \! \int_{0}^{+\infty} u(t) \, \de t = J(u).
			\end{aligned}
			\ee
			
			In the following, we compare the candidate optimal control $u^*$ with the control $u^\circ$. In order to do this, we need to preliminarily study the regularity of the composed function $V(x^\circ(t), y^\circ(t))$. Because of property \eqref{exit-point2} and Lemma~\ref{lemma:V}(i), we have that $V(x^\circ(t), y^\circ(t))$ is piecewise-$\mc C^1$ on $[0, t^\circ)$ and, trivially, on $(t^\circ, +\infty)$. Moreover, for every $t<t^\circ$ where the derivative exists, it satisfies
			\be \label{dotV-}\begin{aligned} \!\!\! \dot V(x^\circ(t),y^\circ(t))
				&= V_x(x^\circ,y^\circ)\dot{x}^\circ + V_y(x^\circ,y^\circ)\dot{y}^\circ\\[3pt]
				&=  (\rho(h(x^\circ,y^\circ),\bar y)\dot{x}^\circ + \dot{y}^\circ )V_y(x^\circ,y^\circ)
			\end{aligned}\ee	
			where the second equality follows from \eqref{eq:V_x}. Because of Definition \eqref{rho-def} and Assumption~\ref{ass:ass1}, the function $\rho(x, y)$ is increasing in $x$  and non-increasing in $y$. Since $h(x^\circ(t), y^\circ(t))\leq x^\circ(t)$ \tcb{by Lemma \ref{lemma:h-c2}(ii)} and $y^\circ(t) \leq \bar y$ for every $t \leq t^\circ$,
			we thus have \be\label{boundary-ineq}\rho(h(x^\circ(t),y^\circ(t)),\bar y)\leq \rho(x^\circ(t),y^\circ(t)).\ee
				\tcb{Hence, for every $t$ in $[0, t^\circ)$ where the derivative exists, we get}
				\be \label{dotV}\begin{aligned} \!\!\! \dot V(x^\circ(t),y^\circ(t))
					& \ge \tcb{(\rho(x^\circ,y^\circ) \dot x^\circ + \dot y^\circ) V_y(x^\circ,y^\circ)} \\[3pt]
					& = -  \gamma y^\circ(t) u^\circ(t) V_y(x^\circ(t),y^\circ(t))  \\[3pt]
					& \ge -  \frac{y^\circ(t)}{\bar y} u^\circ(t)\\[3pt]
					& \ge -  u^\circ(t)
				\end{aligned}\ee
				\tcb{where the first inequality follows from \eqref{dotV-}-\eqref{boundary-ineq}, $\dot x \leq 0$ and from the left-hand inequality of Lemma \ref{lemma:V}(iii), the equality from \eqref{control-system}, the second inequality from the right-hand inequality of Lemma~\ref{lemma:V}(iii) and the last one from the feasibility of $u^{\circ}$.} On the other hand, \eqref{value-function} implies that 
				\be\label{eq:dotV0D-}
				\dot V(x^\circ(t),y^\circ(t)) = 0 = u^\circ(t)\,, \quad \forall t > t^\circ\,,
				\ee since $(x^\circ(t),y^\circ(t))$ belongs to $\mc D_{\bar y}^{-}$ for such times. 
				
				

				%
				%
				\tcb{We now derive an upper bound for $\dot V(x^\circ(t),y^\circ(t))$ for every $t$ in $[0, t^\circ)$ where the derivative exists by}
				\be \label{dotV+}\begin{aligned} \!\!\! \! \dot V(x^\circ(t),y^\circ(t))
					\!& \le \dot y^\circ V_y(x^\circ,y^\circ) \\[3pt]
					& =  \tcb{\gamma y^\circ [R(x^\circ, y^\circ)(1-u^{\circ})-1]V_y(x^\circ,y^\circ)} \\[3pt]
					& \le \tcb{\gamma y^\circ R(x^\circ, y^\circ) V_y(x^\circ,y^\circ)}\\[3pt]
					& \le R(x^\circ(t), y^\circ(t))\\[3pt]
					& \le R(1, 0)
				\end{aligned}\ee
				\tcb{where the first inequality follows from the monotonicity of $x^\circ(t)$ and the non-negativity of $V_x$, the equality from \eqref{control-system}, the second inequality from the left-hand inequality of Lemma \ref{lemma:V}(iii) and from 
					$R(x^\circ, y^\circ)(1-u^{\circ})-1 \le R(x^\circ, y^\circ)$, the third one from the right-hand inequality of Lemma \ref{lemma:V}(iii) and the last one from Assumption \ref{ass:ass1}.} Inequalities \eqref{dotV} and \eqref{dotV+} together with relation \eqref{eq:dotV0D-} yield that $\dot V(x^\circ(t),y^\circ(t))$ is a bounded function on the complementary of a negligible set. Consequently, $V(x^\circ(t),y^\circ(t))$ is an absolutely continuous function on $[0, +\infty)$ and the fundamental theorem of calculus holds. From relations \eqref{dotV} and \eqref{eq:dotV0D-} we thus obtain
				\be\begin{aligned}\label{int1} \tcb{J(u^{\circ})} & =\int_0^{+\infty}  
					u^\circ(s)\de s = \int_0^{t^\circ}  u^\circ(s) \de s \\[3pt]
					& \geq V(x^\circ(0),y^\circ(0)) -V(x^\circ(t^\circ),y^\circ(t^\circ))\\[5pt]
					&= V(x_0, y_0)\,,\end{aligned}\ee
				since $(x^\circ(t^\circ),y^\circ(t^\circ))$ belongs to $\mc D_{\bar y}^{-}$ \tcb{and thus $V(x^\circ(t^\circ),y^\circ(t^\circ)) = 0$ by construction}.

					Consider now the solution of the candidate optimal control $u^*$ and the associated solution $(x^*(t), y^*(t))$ with initial condition $(x_0, y_0)$. From Proposition~\ref{lemma:Ju*}\tcb{(i)}, we know that the solution belongs to $\mc J \subseteq \mc D_{\bar y}^+$ for $t$ in $[T_0, T_1)$. For $t$ in $[T_0, T_1)$, we have the following chain of equalities:
					\be\label{eq:dotV*}
					\begin{aligned} \dot V(x^*(t), y^*(t))
						& = V_x(x^*,y^*)\dot{x}^* + V_y(x^*,y^*)\dot{y}^*\\[3pt]
						& = \tcb{ V_x(x^*,\bar y)\dot{x}^*} \\[3pt]
						& = - \tcb{\rho(x^*,\bar y) V_y(x^*,\bar y) \gamma\bar y} \\[3pt]
						& =-  u^*(t)\,,
					\end{aligned}
					\ee
					where the \tcb{second equivalence follows from Proposition \ref{lemma:Ju*}(i) and from \eqref{eq:solution_control}, the third one} from Proposition \ref{lemma:Ju*}(i), \eqref{eq:V_x} and from the fact that  $h(x^*(t),y^*(t)) = x^*(t)$ \tcb{whenever $y^*(t) = \bar y$ (cf. Lemma \ref{lemma:h-c2}(ii))}, and the last one from Lemma~\ref{lemma:V}(iv) and from \eqref{def-control}.  Thanks to 
					the fact that $(x^*(T_0), y^*(T_0))\in \mc D_{\bar y}^+$, $(x^*(T_1), y^*(T_1))\in \mc D_{\bar y}^-$ and \eqref{eq:dotV*}, this yields
					\be\begin{aligned}\label{int2} 
						\tcb{J(u^*)} & = \int_0^{+\infty}  u^*(s)\de s = \int_{T_0}^{T_1}  u^*(s) \de s \\[3pt]
						&=  V(x^*(T_0), y^*(T_0)) - V(x^*(T_1), y^*(T_1))\\[3pt]
						& = \tcb{V(x^*(T_0), y^*(T_0))} \\[3pt]
						& = V(x_0, y_0)\\[3pt]
						& \le \tcb{J(u^{\circ})}\,,
					\end{aligned}\ee
					where the last equality follows from Remark~\ref{remark:v-const} \tcb{and the last inequality from \eqref{int1}} since $u^*=0$ for all $t \in [0, T_0]$. 
					Finally, relations \eqref{control-init} and \eqref{int2} imply
					$$
					\tcb{J(u) \ge J(u^{\circ}) \ge J(u^*)\,.}
					$$
					As $u$ was chosen to be any feasible control function, this proves the optimality of $u^*$. \tcb{Since $u^*$ is feasible by Proposition~\ref{lemma:Ju*}(iii)}, the proof of existence in Theorem~\ref{theo:theo-control} is completed.
					
					We next prove uniqueness. Suppose $u$ in $\mc U_{x_0,y_0}^{\bar{y}}$ is optimal. Following previous considerations, similarly defining $t^o$ as the first time the corresponding solution $(x^u, y^u)$ exits the region $\mc D_{\bar y}^+$ and defining $u^o$ as in \eqref{cut-}, we notice that inequality \eqref{control-init} must now be an equality. This forces $u(t)=0$ for $t>t^o$. We now consider $u(t)$ for $t\leq t^o$ and we prove that $u$ must be $0$ before the solution reaches the threshold value $\bar y$. We define $\mc C=\{t\in [0, t^o]\,|\, y^u(t)<\bar y, u(t)>0\}$ and prove that, indicating with $\mu$ the Lebesgue measure, $\mu(\mc C)=0$. Suppose by contradiction that $\mu(\mc C)>0$. Then standard considerations on the Lebesgue integral imply that
					$$\int_{ \mc C} u(s) \de s>\int_{\mc C}  \frac{y^u(s)}{\bar y}u^\circ(s) \de s.$$
					Combining this with the second inequality in \eqref{dotV} and following steps as in \eqref{int1}, we obtain
					\be\label{int1bis} J(u)\! = \! \int_0^{t^\circ}  u(s) \de s 
					\!> \!
					\int_0^{ t^o}  \frac{y^u(s)}{\bar y}u^\circ(s) \de s\geq V(x_0, y_0) \!=\! J(u^*).
					\ee
					This contradicts the supposed optimality of $u$ and implies that $\mu(\mc C)=0$. Since $u$ is assumed to be right-continuous and $y$ is continuous, this implies that $u(t)=0$ for every $t$ such that $y^u(t)<\bar y$. In conclusion, $u$ has to be null as long as the second component $y^u(t)$ of the solution stays below the threshold value $\bar y$ and from the first time the solution exits $\mc D_{\bar y}^+$. Therefore, it can only be non null when the solution is such that $y^u(t)=\bar y$ and $x^u(t)>\bar x$. We now conclude by showing that, necessarily $u(t)=\rho(x^u(t), \bar y)$ in all such points. We first notice that we must have $u(t)\geq \rho(x^u(t), \bar y)$.  Indeed, if $u(\bar t)< \rho(x^u(\bar t), \bar y)$ for some $\bar t$, equations \eqref{control-system} and the right-continuity of $u$ imply that $u$ would not be feasible as it would allow the component $y^u$ to exceed $\bar y$ in a right neighborhood of $\bar t$. Moreover, if $u(\bar t)> \rho(x^u(\bar t), \bar y)$ at some point $\bar t$, then, similarly, $y^u$ would be decreasing in a right neighborhood of $\bar t$ and thus $u$ would be non null in instants where $y^u$ is below $\bar y$. As this fact was already shown to lead to non optimality, it is also forbidden for $u$. This proves that $u(t)=\rho(x^u(t), \bar y)$ whenever the solution is such that $y^u(t)=\bar y$ and $x^u(t)>\bar x$, and thus is necessarily identical to $u^*$.

					\section{Counterexample}\label{sec:counterexample}
					In this section we provide a numerical counterexample that highlights the role of Assumption~\ref{ass:ass1} in guaranteeing the optimality of the filling-the-box strategy. Specifically, we consider a transmission rate of the form $\beta(x,y) = (1 - 0.7x) y$, which does not satisfy Assumption~\ref{ass:ass1} and under this setting, we show that the filling-the-box strategy is no longer optimal.
					
					\begin{figure}
						\subfloat[][]{\includegraphics[width=4.5cm, height=7.4cm]{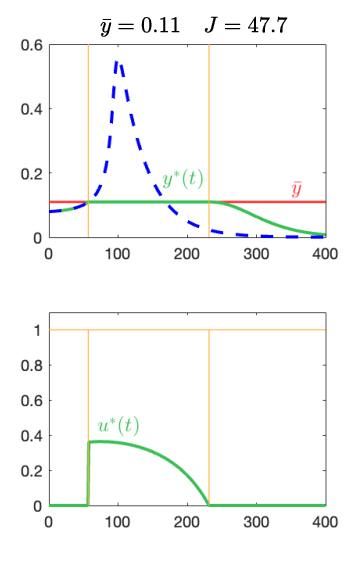}}
						\subfloat[][]{\includegraphics[width=4.5cm, height=7.4cm]{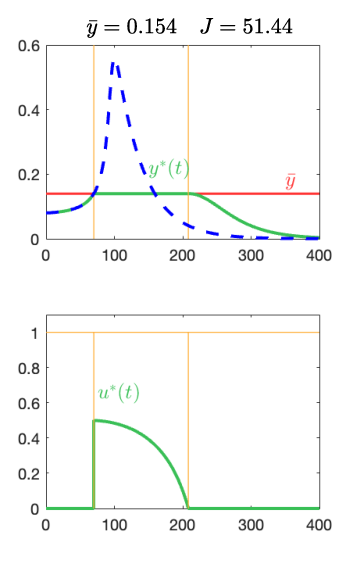}}
						\caption{Numerical counterexample illustrating the \tcb{filling-the-box strategy for two different thresholds: (a) $\bar y = 0.11$ and (b) $\bar y = 0.154$}. In each case, the top plot shows the infection dynamics: the uncontrolled epidemic is represented by the blue dashed curve, while the controlled trajectory under the filling-the-box strategy is plotted in green. The bottom plot reports the corresponding control function applied over time. 
						}
						\label{fig:counterexample}
					\end{figure}
					
					To illustrate this, we compare \tcb{the cost of the} filling-the-box strategy \tcb{for two different thresholds} $\bar y$. The initial condition is fixed at $y_0 = 0.08$ and $x_0 = 1-y_0$. Figure~\ref{fig:counterexample} shows the two simulations. 
					
					When the threshold is set to $\bar y = 0.11$, the policy produces a total cost of $47.7$. Surprisingly, increasing the threshold to $\bar y = 0.154$ leads to a higher total cost of $51.44$. As the control signal implemented for $\bar y = 0.11$ is also feasible for $\bar y = 0.154$, this proves that the filling-the-box strategy is not optimal for this transmission rate. The explanation for this fact is that, since the reproduction number in this example is increasing in $y$, keeping the infection curve at the highest admissible level does still lead to shorter infection (i.e. one can deactivate the control sooner), but it requires a stronger intervention (as the bottom plots of Figure \ref{fig:counterexample} illustrate), resulting in a larger cumulated cost.
					This example shows that when Assumption~\ref{ass:ass1} is not satisfied, the simple filling-the-box strategy may fail to be optimal, so that no further extension of our optimality result is possible in general. At the same time, we have identified a broad and realistic class of infection-rate functions for which Assumption~\ref{ass:ass1} is satisfied, implying that the filling-the-box policy remains optimal in many practically relevant scenarios, while outside this regime the timing of interventions becomes critical and optimal policies may require more sophisticated strategies than a fixed-threshold rule.

					\section{Conclusion}\label{sec:conclusion}
					This paper has addressed the optimal control of a behavioral-feedback SIR epidemic model in which the infection rate depends in feedback on both the fraction of susceptible and infected agents. Under a general monotonicity condition on the feedback mechanism, the infection curve is proven to be unimodal as in the classical SIR model. Building on this property, we studied an optimal control problem that minimizes intervention costs while keeping the infection curve below a prescribed threshold over an infinite \tcb{time} horizon. Through a geometric analysis of the uncontrolled dynamics, we showed that, under mild regularity and monotonicity assumptions, the filling-the-box strategy is the unique optimal control, among the feasible right-continuous \tcb{control signals}.  
					
					Our work suggests several directions for future research.  A first step would be to generalize the cost functional, for instance by introducing nonlinear intervention costs or by explicitly incorporating the social and economic burden of infections. Another natural extension is to consider delays in the infection rate, reflecting the realistic lag in observing and reacting to epidemic dynamics. Finally, extending the framework to structured populations or complex networks would provide further insights into the interplay between behavioral feedback and optimal control.  
					
					\section*{References}
					\bibliographystyle{IEEEtran}
					\bibliography{bib}
					
					\appendix
					
						%
					
					\section{Proof of Proposition \ref{prop:general}}\label{sec:proof-wellposedness}
					The right-hand side of \eqref{control-system-full} is locally Lipschitz with respect to $(x,y,z)$. Hence, on every time interval on which $u$ is continuous, local existence and uniqueness of the solution for the corresponding Cauchy problem is a consequence of the Picard-Lindelöf theorem \cite[I.3]{hale}). Consider now any interval $\mc I$ admitting minimum $t_0$ on which $u$ is continuous. By integrating equations in \eqref{control-system-full}, assuming, for $t=t_0$, initial condition $(x_0, y_0,z_0)$ in $\Delta$, we get 
					\begin{align}
						x(t) &= x_0 \exp\left( \int_{t_0}^{t} \left( -(1-u(\tau))\beta(x(\tau), y(\tau))  y(\tau) \right)\de \tau \right), \label{eq:sol_x}\\
						y(t) &= y_0 \exp\left( \int_{t_0}^{t} \left( (1-u(\tau))\beta(x(\tau), y(\tau))  x(\tau)- \gamma \right)\de \tau \right), \label{eq:sol_y} \\
						z(t) &= z_0 + \gamma \int_{t_0}^{t} y(\tau) \, d\tau\,.
					\end{align}
					Moreover, the sum of the equations in \eqref{control-system-full} gives $\dot{x}(t) + \dot{y}(t) + \dot{z}(t)= 0$. This implies that, as long as the solution $(x(t), y(t),z(t))$ exists, we must have that \be\label{X}x(t) \geq 0, \, y(t) \geq 0, \, z(t) \geq 0, \quad x(t) + y(t)+z(t) = 1\,.\ee In other terms, as long as the solution exists, it lives in $\Delta$.  In particular, this implies that the solution is globally defined on the whole of $\mc I$ \cite[Section 3.3]{knauf}. We now construct a global solution on $[0,+\infty)$ in a recursive way. We order the discontinuity points as $t_1<t_2<\cdots$.
					First, we let $u^k:[t_k, t_{k+1}]\to\R$ for $k\geq 0$ (letting $t_0=0$) to be the unique continuous function that coincides with $u$ on $[t_k, t_{k+1}]$. Similarly, we define $u^k:[t_k, +\infty)\to\R$ when the set of discontinuities is finite and has cardinality $k$. We now define $(x(t),y(t),z(t))$ on $[t_0, t_1]$ as the unique solution of the Cauchy problem relative to \eqref{control-system-full} with $u=u_0$ and initial condition $(x_0,y_0,z_0)$. Assuming the solution $(x,y,z)$ to be constructed up to $t_k$ with the desired properties, we extend it to the next discontinuity (or to $+\infty$ in case this was the last one) by considering the solution $(x^k(t), y^k(t),z^k(t))$ of the Cauchy problem on $[t_k, t_{k+1}]$ relative to the input function $u_k$ and initial condition $(x(t_k), y(t_k), z(t_k))$. We then extend $(x,y,z)$ to $[0, t_{k+1}]$ by setting $(x(t),y(t),z(t))=(x^k(t), y^k(t),z^k(t))$ for $t\in [t_k, t_{k+1}]$. This construction yields the first part of the proposition.

					Item (ii) is an immediate consequence of \eqref{eq:sol_y}. \tcb{The fact that $x(t)$ is non-decreasing follows from \eqref{X} and from the first equation in \eqref{control-system-full}. The fact that it is strictly decreasing if and only if $x_0>0$ and $y_0>0$ follows from item (ii), which ensures that $y(t)>0$ for every $t>0$ if and only if $y_0$, and from \eqref{eq:sol_x}, which ensures that $x(t)>0$ if and only if $x_0>0$. This proves item (i).} 
					
					To prove item (iii), note that point $(x_e, y_e, z_e)$ in $\Delta$ is an equilibrium if and only if 
					\begin{equation*}
						\begin{cases}
							0=-(1-u)\beta(x_e,y_e)x_ey_e,\\
							0=(1-u)\beta(x_e,y_e)x_ey_e-\gamma y_e,\\
							0=\gamma y_e\,,
						\end{cases}
					\end{equation*} 
					This implies that the points of the form $(x_e, 0, 1-x_e)$ are all equilibria and moreover, these are the only equilibria, since every point with $y_e \neq 0$ would not satisfy the equilibrium conditions given above. To prove item (iv), observe that the sum of the first two equations of \eqref{control-system-full} is $\dot{x}(t) + \dot{y}(t) = -\gamma y(t)\leq0$.
					Since $x(t)$ and $x(t) + y(t)$ are decreasing functions, they both admit a limit:
					$$x_\infty := \lim_{t \to +\infty} x(t),\;\; \xi_\infty := \lim_{t \to +\infty} [x(t)+y(t)]\,.$$
					As a consequence, the following limit
					\begin{equation}
						y_{\infty}= \lim_{t \to +\infty} y(t)=\xi_{\infty}-x_{\infty}\geq 0
					\end{equation}
					exists.	
					Suppose now by contradiction that $y_{\infty} >0$. This means that there exist $T, M >0$ such that $y(t) > M$, for all $t \geq T$. Then, we get 
					\begin{align*}
						\lim_{t \to +\infty} [x(t) + y(t) ]&= x(0) + y(0) - \gamma \int_{0}^{\infty} y(\tau) \de\tau \\
						&= x(0) + y(0) - \gamma \lim_{t \to +\infty} \int_{0}^{t} y(\tau) \de\tau \\
						&< x(0) + y(0) - \gamma \lim_{t \to +\infty} \int_{T}^{t} y(\tau) \de\tau \\
						&< x(0) + y(0) - \gamma M \lim_{t \to +\infty} (t- T) \\ 
						&= - \infty
					\end{align*}
					As this is a contradiction, it must be $y_\infty =0$. 
					This concludes the proof.

					\begin{IEEEbiography}[{\includegraphics[width=1in,height=1.25in,clip,keepaspectratio]{./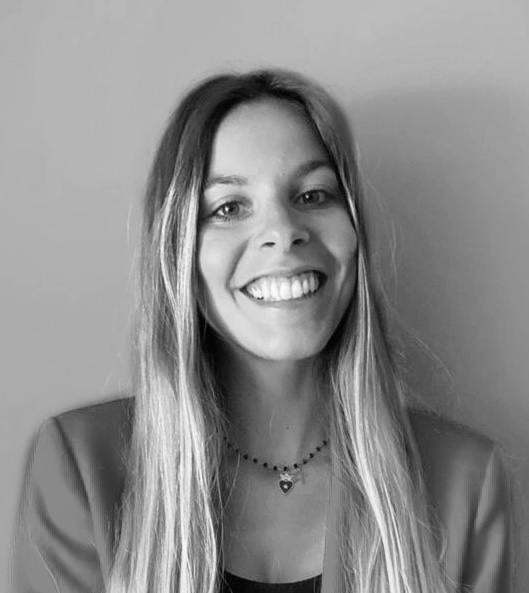}}]{Martina Alutto} received the B.Sc.~and the M.S.~(cum laude) in Mathematical Engineering from  Politecnico di Torino, Italy, in  2018 and 2021, respectively and the PhD in Pure and Applied Mathematics in 2025 at the Department of Mathematical Sciences, Politecnico di Torino, Italy. She was a Research Assistant at the National Research Council (CNR-IEIIT), Torino, Italy. She is currently a Postdoctoral Researcher with the Royal Institute of Technology, Stockholm, Sweden. She was Visiting Student at Cornell University, Ithaca, NY in 2023. Her research interests focus on analysis and control of network systems, with application to epidemics and social networks.
					\end{IEEEbiography}
					
					\begin{IEEEbiography}[{\includegraphics[width=1in,height=1.25in,clip,keepaspectratio]{./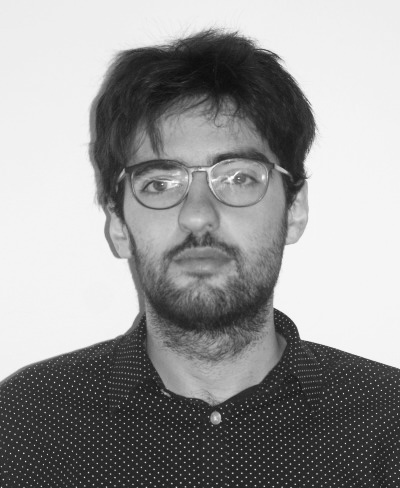}}]{Leonardo Cianfanelli} received the B.Sc.  in Physics and Astrophysics in 2014 from Universit\`a di Firenze, Italy, the M.S. in Physics of Complex Systems in 2017 from Universit\`a di Torino, Italy, and the PhD in Pure and Applied Mathematics in 2022 from Politecnico di Torino,Italy. He is currently a Research Assistant at the Department of  Mathematical Sciences, Politecnico di Torino, Italy. He was a Visiting Student at the Laboratory for Information and Decision Systems, Massachusetts Institute of Technology, in 2018--2020. His research focuses on control in network systems, with application to transportation and epidemics.
					\end{IEEEbiography}
					
					\begin{IEEEbiography}[{\includegraphics[width=1in,height=1.25in,clip,keepaspectratio]{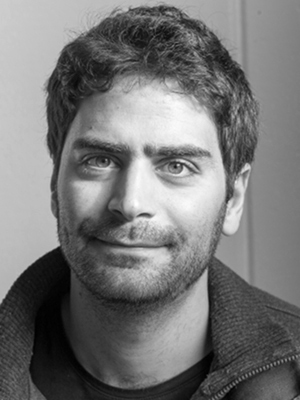}}]
						{Giacomo Como} is  a  Professor at  the Department  of  Mathematical  Sciences,  Politecnico di  Torino,  Italy. He is also a Senior Lecturer  at  the  Automatic  Control  Department, Lund  University,  Sweden.  He  received the B.Sc., M.S., and Ph.D.~degrees in Applied Mathematics  from  Politecnico  di  Torino, Italy, in  2002,  2004, and 2008, respectively. He was a Visiting Assistant in  Research  at  Yale  University  in  2006--2007  and  a Postdoctoral  Associate  at  the  Laboratory  for  Information  and  Decision  Systems,  Massachusetts  Institute of Technology in  2008--2011. Prof.~Como currently serves as Senior Editor for the \textit{IEEE Transactions on Control of Network Systems}, and as Associate  Editor  for \textit{Automatica} and the \textit{IEEE Transactions on Automatic Control}.  He served as Associate Editor for the  \textit{IEEE Transactions on Network Science and Engineering} (2015-2021) and for the \textit{IEEE Transactions on Control of Network Systems} (2016-2022).  He was  the  IPC  chair  of  the  IFAC  Workshop  NecSys'15 ,  a  semiplenary speaker  at  the  International  Symposium  MTNS'16, and the  chair  of the  {IEEE-CSS  Technical  Committee  on  Networks  and  Communications} (2019-2024).  
						He  is  recipient  of  the 2015  George S. ~Axelby  Outstanding Paper Award.  His  research interests  are in  dynamics,  information,  and  control  in  network  systems. 
					\end{IEEEbiography}
					
					\begin{IEEEbiography}[{\includegraphics[width=1in,height=1.25in,clip,keepaspectratio]{./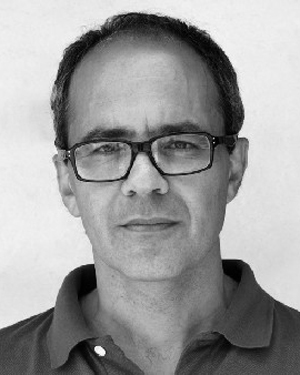}}]
						{Fabio Fagnani}
						received the Laurea degree in Mathematics from the University of Pisa and the Scuola Normale Superiore, Pisa, Italy, in 1986. He received the PhD degree in Mathematics from the University of Groningen,  Groningen,  The  Netherlands,  in 1991. From 1991 to 1998, he was an Assistant Professor at the Scuola Normale Superiore. In 1997, he was a Visiting Professor at the Massachusetts Institute of Technology. Since 1998, he has been with the Politecnico of Torino. From 2006 to 2012, he acted as Coordinator of the PhD program in Mathematics for Engineering Sciences at Politecnico di Torino. From 2012 to 2019, he served as the Head of the Department of Mathematical Sciences, Politecnico di Torino. His current research focuses on network systems, inferential distributed algorithms, and opinion dynamics. He is an Associate Editor of the \textit{IEEE Transactions on Automatic Control} and served in the same role for the \textit{IEEE Transactions on Network Science and Engineering} and the \textit{IEEE Transactions on Control of Network Systems}.
					\end{IEEEbiography}
					
					\begin{IEEEbiography}[{\includegraphics[width=1in,height=1.25in,clip,keepaspectratio]{./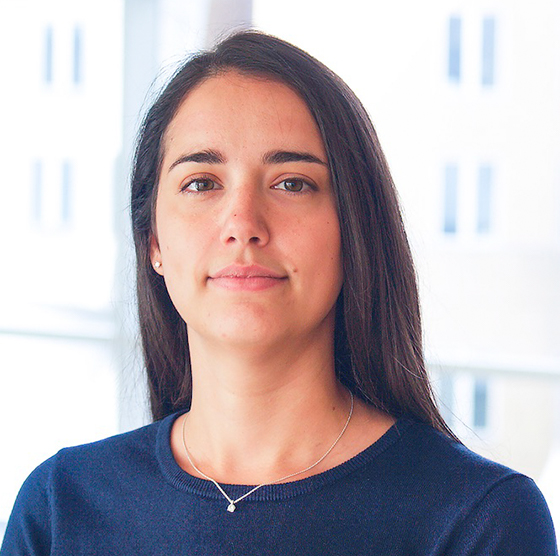}}]
						{Francesca Parise} is an Assistant Professor at Cornell University in the School of Electrical and Computer Engineering. She received the B.Sc. and M.Sc. degrees in Information and Automation Engineering from the University of Padova, Italy, in 2010 and 2012. She graduated from the Galilean School of Higher Education, University of Padova, Italy, in 2013. She defended her Ph.D. at the Automatic Control Laboratory, ETH Zurich, Switzerland in 2016 and was a Postdoctoral researcher at MIT from 2016 to 2020. Her research focuses on analysis and control of large multi-agent systems, with application to transportation, energy, social and economic networks.
					\end{IEEEbiography}
				\end{document}